\newtheorem{theorem}{Theorem}
\newtheorem{proposition}{Proposition}
\newtheorem{lemma}{Lemma}
\newtheorem{remark}{Remark}
\newtheorem{definition}{Definition}
\def\sU{{\mathsf U}}
\def\sX{{\mathsf X}}
\def\sY{{\mathsf Y}}
\def\cB{{\mathcal B}}
\def\cF{{\mathcal F}}
\def\cM{{\mathcal M}}
\def\cP{{\mathcal P}}
\def\law{\operatorname{Law}}
\def\d{\operatorname{d}\!}
\def\E{{\mathbf E}}
\def\Pr{{\mathbf{P}}}
\def\1{{\mathsf 1}}
\def\deq{:=}
\def\R{{\mathbb R}}
\def\d{\mathrm{d}}
\def\eps{\varepsilon}
\def\d{{\mathrm d}}
\def\E{{\mathbb E}}
\def\1{{\mathbf 1}}
\def\eps{\varepsilon}
\def\cA{{\cal A}}
\def\cB{{\cal B}}
\def\cF{{\cal F}}
\def\cM{{\cal M}}
\def\cP{{\cal P}}
\def\sY{{\cal Z}}
\def\sP{{\mathsf P}}
\def\sU{{\mathsf U}}
\def\sX{{\mathsf X}}
\def\sY{{\mathsf Y}}
\def\Reals{{\mathbb R}}
\def\tpi{\tilde{\pi}}
\title{Sequential Empirical Coordination\\
Under an Output Entropy Constraint }
\author{Ehsan Shafieepoorfard and Maxim Raginsky,~\IEEEmembership{Senior Member,~IEEE}%
\thanks{E.~Shafieepoorfard and M.~Raginsky are with the Department of Electrical and Computer Engineering and the Coordinated Science Laboratory, University of Illinois at Urbana--Champaign, Urbana, IL 61801, USA; {\tt \{shafiee1,maxim\}@illinois.edu}.}
\thanks{Research supported in part by the NSF under CAREER award no. CCF-
1254041, and by the Center for Science of Information (CSoI), an NSF
Science and Technology Center, under grant agreement CCF-0939370. Prelminary version of this work was presented at the IEEE Conference on Decision and Control, Las Vegas, NV, December 2016.}}
\begin{document}

\maketitle

\thispagestyle{empty}




\begin{abstract}
This paper considers the problem of sequential
empirical coordination, where the objective is to achieve a
given value of the expected uniform deviation between state-action
empirical averages and statistical expectations under a
given strategic probability measure, with respect to a given universal
Glivenko-Cantelli class of test functions. A communication
constraint is imposed on the Shannon entropy of the resulting
action sequence. It is shown that the fundamental limit on
the output entropy is given by the minimum of the mutual
information between the state and the action processes under
all strategic measures that have the same marginal state process
as the target measure and approximate the target measure
to desired accuracy with respect to the underlying Glivenko--Cantelli
seminorm. The fundamental limit is shown to be
asymptotically achievable by tree-structured codes.
\\ \\
\begin{IEEEkeywords} 
	coordination via communication, empirical processes, sequential rate 
	distortion, causal source coding
\end{IEEEkeywords}
\end{abstract}




\section{Introduction}
\label{ssec:introduction}

\IEEEPARstart{D}{ecision-making} in the presence of limited communication or information acquisition resources has long been a major topic of interest in the study of both networked control systems \cite{witsenhausen1979structure,teneketzis1980communication,walrand1983optimal,mitter2001control,shafieepoorfard2016rationally} and economics \cite{sims2003implications,luo2006rational,huang2007rational, van2010information,Tutino2012, matvejka2015rationally, martin2016strategic}. The problem becomes more intricate when causality constraints are imposed, and decisions must be made in real time. There are many recent works on this topic dealing with various types of information constraints and  structural assumptions about the source (see, e.g., \cite{borkar2001optimal, borkar2005sequential, teneketzis2006structure, yuksel2008optimal, charalambous2014nonanticipative, linder2014optimal, tanaka2014semidefinite, stavrou2017zero}).

Reconstruction of an information source from its compressed version subject to a fidelity criterion is the focus of rate-distortion theory \cite{berger1971rate}; there is also a sequential generalization of rate-distortion theory \cite[Ch.~5]{TatikondaThesis} to reconstruction problems with causality constraints {(additional relevant works, motivated by video and perceptual coding, include \cite{vis2000seq,maishwar2011seq,yang2011video})}. Moreover, when the reconstruction is fed into a controller that can act on the information source, and if one can establish some form of the separation principle between estimation and control, the methods of sequential rate-distortion theory can be brought to bear on the problem of optimal quantizer design for this problem of \textit{control under communication constraints} \cite{witsenhausen1971separation,varaiya1983causal,bansal1989simultaneous, bacsar1994optimum, mitter2001control, borkar2001markov, tatikonda1999control, matveev2004problem, yuksel2013stochastic, tanaka2015lqg}. 

However, there is an alternative perspective on the problem of compressed representations in networked control systems -- that of \textit{empirical coordination} under communication constraints. The problem of coordination, first introduced in the information theory literature by Cuff et al.~\cite{cuff2010coordination} (see also \cite{yassaee2015channel}), can be stated as follows: Consider a finite collection of decision makers (or DM's, for short), who wish to generate actions in response to a random state variable  according to some prescribed policy, but can only receive information about the state  over finite-capacity noiseless digital links. Suppose that we have a large number of independent and identically distributed (i.i.d.) copies of the state, and let the DM's generate a sequence of actions based on the information they receive about this state sequence. What are the minimal communication requirements (in bits per copy), to guarantee that the long-term empirical frequencies of realized states and actions approximate, to desired accuracy, the ideal joint probability law of states and actions induced by the marginal law of the state and the policy?

Cuff et al.~\cite{cuff2010coordination} assume that both the state and the actions take values in finite sets, and measure the quality of approximation by the total variation distance between the empirical distribution of states and actions and the target joint distribution. However, this criterion is inapplicable to continuous-valued states and/or actions with nonatomic probability laws because the total variation distance between any nonatomic probability measure and any discrete probability measure attains its maximal value. To resolve this issue, Raginsky \cite{raginsky2013empirical} proposed a relaxed approximation criterion: Fix a suitable class of bounded real-valued test functions on the space of all state-action pairs and consider the worst-case deviation between their empirical averages and their expectations with respect to the target measure. Under the regularity assumption that the class of test functions has the so-called \textit{universal Glivenko--Cantelli property} (cf.~\cite{van2013universal} and references therein, as well as Section~\ref{ssec:notation} for definitions), Raginsky \cite{raginsky2013empirical} obtained a full information-theoretic characterization of the minimal communication requirements for empirical coordination. Since any uniformly bounded class of real-valued functions on a finite set is universal Glivenko--Cantelli, the framework of \cite{cuff2010coordination} emerges as a special case.

In this paper, we present an extension of the empirical coordination framework of \cite{raginsky2013empirical} to the sequential setting: We consider a two-terminal network consisting of a sender and a decision-maker (DM). The sender observes $N$ independent copies of a discrete-time state process of fixed finite duration $T$. {It is useful to think of each copy as input data for a task, which involves taking $T$ actions contingent on the states. Completion of the task involves implementing a fixed causal policy on the state process corresponding to that task.} However, the DM has no direct access to the state processes. Instead, the sender can communicate with the DM over a finite-capacity noiseless digital channel, {and the idea is to exploit statistical regularity across tasks to reduce the amount of communication needed to guarantee that, on average, the performance of the DM on all the tasks resembles the ideal joint distribution of states and actions prescribed by the policy.} Thus, we are interested in the {\textit{communication complexity} of coordination, i.e.,} the minimal amount of communication needed to guarantee that, in the limit as $N \to \infty$, the empirical distribution of states and actions at each time $t \in [T]$ can approximate the state-action distribution induced by the state process law and by the policy specification. The coding scheme employed by the sender must satisfy the sequentiality constraint: The signal transmitted by the sender to the DM at time $t$ may only depend on the realizations of the state processes up to time $t$. Following Tatikonda \cite{TatikondaThesis}, we quantify the communication resources by the Shannon entropy of the signal process. Entropy constraints on the quantizer output are commonly used in causal source coding problems \cite{neuhoff1982causal}, where the compressed representation of the source at time $t$ may depend on the present and on the past source samples, but not on the future ones.

Our main contribution is a full information-theoretic characterization of the fundamental limit on the amount of communication from the sender to the DM in the setting of sequential empirical coordination. We refer to this fundamental limit as the \textit{sequential rate-distortion function for empirical coordination}. Specifically, we show that, for all large enough $N$, this fundamental limit can be achieved by means of tree-structured codes of the kind employed by Tatikonda \cite{TatikondaThesis}, and that no sequential scheme for empirical coordination can beat this fundamental limit.\footnote{The reference policy for generating actions contingent on the states may be randomized. However, we restrict the sequential encoder used by the sender and the sequential decoder used by the DM to be \textit{deterministic}. The reason for this is that, when randomized strategies are used in the absence of a noiseless feedback channel from the DM to the sender, the sender has to form beliefs about the actions taken by the DM, who will in turn form beliefs about the beliefs by the sender about the actions taken by the DM, and so on, leading to the so-called infinite regress of expectations (see, e.g.,\cite{sargent1991equilibrium}). This lack of precise knowledge on the part of the sender will accumulate over time. Restricting to deterministic strategies removes this problem: at any time $t$, the sender is strictly better informed than the receiver and can perfectly reconstruct the actions taken by the receiver.} While we do not make any structural assumptions on the state process (e.g., it is not assumed to be memoryless, Markov, ergodic, etc.), we assume that the target policy is feedforward (i.e., there is no functional dependence of future states on current and past actions).

Like other set-ups that include the interplay between information acquisition and decision-making, the sequential coordination problem considered here can also be interpreted in the framework of economics. It arises when a finite number of economic agents (or sectors) with constrained cognitive (or communication) resources \cite{sims2003implications} are subject to idiosyncratic economic shocks. A better-informed information sender -- such as a central bank or monetary authorities -- wishes to recommend optimal actions to all the agents through a common public signal. On average, though, the sender's optimal signaling strategy must take into account the limits on information-processing capacities of all of the agents. Our paper addresses several features of this set-up as well; however, we do not consider situations involving strategic motives, in which different players involved in the information exchange have biased or opposing objectives. Strategic considerations have been addressed recently, both in economics \cite{crawford1982strategic, rayo2010optimal, gentzkow2011bayesian} and in  information theory \cite{akyol2015strategic}.

\subsection{Contents of the paper}

The organization of the paper is as follows. We introduce the notation and basic concepts (in particular, Glivenko--Cantelli classes) in Section~\ref{ssec:notation}. The precise formulation of the sequential empirical coordination problem is given in Section~\ref{ssec:formulate}. The main results are presented in Section~\ref{ssec:mmain}, with some examples discussed in Section~\ref{sec:implicc}. Appendix~\ref{app:uGCappendix} contains a discussion of typicality in standard Borel spaces based on universal Glivenko--Cantelli classes. Two technical lemmas needed in the achievability proof are given in  Appendix~\ref{app:lemmas}.



\section{Preliminaries and Notations}
\label{ssec:notation}

All spaces in this paper are assumed to be {\itshape standard Borel spaces}, as defined below (for detailed treatments, see the lecture notes of Preston \cite{preston2008some} or Chapter 4 of Gray \cite{gray2011entropy}). 

\begin{definition} A measurable space $(\sX , \cB(\sX))$ is  {\itshape standard Borel} if it can be metrized with a metric $d$, such that: 1) $(\sX , d)$ is a complete separable metric space, and 2) $\cB(\sX)$ is the Borel $\sigma$-algebra, i.e., the smallest $\sigma$-algebra containing all open sets in $(\sX,d)$.
\end{definition}


\noindent We denote by $\cP(\sX)$ the space of all Borel probability measures on $\sX$, and by $M(\sX)$ the space of all bounded measurable functions $\sX \to \Reals$ equipped with the sup norm
\[
||f||_{\infty} \deq \sup_{x \in \sX} |f(x)|.
\]
We use the standard inner-product notation for integrals: given any signed Borel measure $\nu$ on $\sX$ and $f \in M(\sX)$,
\[
	\langle \nu,f \rangle \deq \int_\sX f(x) \,\nu(\d x) . 
\]
When $\nu \in \cP(\sX)$, we will also use the standard expectation notation $ \E_\nu \left[ f(X) \right]$. A \textit{Markov} (or \textit{stochastic}) \textit{kernel} with input space $\sX$ and output space $\sY$ is a mapping $K(\cdot | \cdot) : \cB(\sY) \times \sX \rightarrow [0,1]$, such that $K(\cdot|x) \in \cP(\sY)$ for every $x \in \sX$ and $K(B|\cdot) \in M(\sX)$ for every $B \in \cB(\sY)$. We denote the space of all such kernels by $\cM(\sY|\sX)$. Any $K \in \cM(\sY|\sX)$ maps $\cP(\sX)$ into $\cP(\sY)$:
\begin{align*}
\mu K(\cdot) \deq \int_{\sX} K(\cdot|x)\mu(\d x).
\end{align*}
Given a probability measure $\mu \in \cP(\sX)$ and a Markov kernel $K \in \cM(\sY|\sX)$, we denote by $\mu \otimes K$ the probability measure  on the product space $(\sX \times \sY, \cB(\sX) \otimes \cB(\sY))$ uniquely specified by its values on the rectangles $A \times B$, $A \in \cB(\sX), B \in \cB(\sY)$:
\[
	(\mu \otimes K)(A \times B) \deq \int_A K(B|x)\mu(\d x).
\]
If we let $A = \sX$ in the above definition, then we end up with with $\mu K(B)$. Note that product measures $\mu \otimes \nu$, where $\nu \in \cP(\sY)$, arise as a special case of this construction, since any $\nu \in \cP(\sY)$ can be realized as a Markov kernel $(B,x) \mapsto \nu(B)$. Conversely, given a random element $(X,Y)$ of $\sX \times \sY$, its probability law $\nu \in \cP(\sX \times \sY)$ can be disintegrated as $\mu \otimes K$, where $\mu(\cdot) = \nu(\cdot \times \sY) \in \cP(\sX)$ is the marginal distribution of $X$ and $K \in \cM(\sY|\sX)$ is a version of the conditional distribution of $Y$ given $X$.

\subsection{Universal Glivenko--Cantelli classes}

The notion of a {\itshape universal Glivenko-Cantelli class} \cite{van2013universal} (or uGC class for short) plays a central role in this paper. The main reason for adopting this notion is that it leads to a fruitful extension of the notion of typical sequences in standard Borel spaces \cite{raginsky2013empirical} (cf.~Appendix~\ref{app:uGCappendix} for a discussion). Here, we set up the notation and the definitions that will be needed in the sequel.

Given a class of measurable functions $\mathcal{F} \subseteq \{f \in M(\sX) : \|f\|_\infty \le 1\}$ and a signed Borel measure $\nu$ on $\sX$, we define the seminorm
\[
\Vert\nu\Vert_{\mathcal{F}}\deq\sup_{f\in\mathcal{F}}\left\vert
\langle\nu,f\rangle\right\vert .
\]
Let $[N] \deq \{1,\ldots,N\}$, for $N \in {\mathbb N}$, and let $x_{[N]} = (x_1,\ldots,x_N)$ denote an $N$-tuple of elements of $\sX$. The \textit{empirical measure} of $x_{[N]}$ is an element of $\cP(\sX)$, defined as\footnote{Since $\sX$ is a standard Borel space, all singletons are measurable and belong to $\cB(\sX)$.}
\[
\sP_{x_{[N]}}(\cdot) =\frac{1}{N}\sum_{n \in [N]}\delta_{x_n}(\cdot),
\]
where $\delta_{x}$ is the Dirac measure centered at $x$.
 \begin{definition} \label{def:uGC} A function class $\mathcal{F} \subset \{ f \in M(\sX) : \|f\|_\infty \le 1\}$ is a \textit{universal Glivenko--Cantelli class} (or a uGC class, for short) if
 \[
\left\Vert \sP_{X_{[N]}}-\mu\right\Vert _{\mathcal{F}}%
\xrightarrow{N\to\infty}0,\qquad\mu\text{-a.s.}
\]
for any $\mu \in \cP(\sX)$, where $X_1,X_2,\ldots$ is a stationary memoryless random process with marginal distribution $\mu$.
 \end{definition} 
For example, if $\sX = \R$, then the class 
\[
\cF \deq \left\{ f_z = \1_{(- \infty ,  z]} : z \in \R \right\}
\]
of indicator functions of semi-infinite intervals is a uGC class (this is the well-known Glivenko--Cantelli theorem, which explains the origin of the name ``universal Glivenko--Cantelli'').

\subsection{Information-theoretic preliminaries}

Throughout the paper, we rely on standard definitions and notions from information theory. The {\em relative entropy} (or {\em information divergence})\cite{PinskerBook} between $\mu,\nu \in \cP(\sX)$ is
\begin{align*}
	D(\mu \| \nu) &\deq \begin{cases}
	\left\langle \mu, \log \displaystyle\frac{\d\mu}{\d\nu} \right\rangle, & \text{ if $\mu \prec \nu$}  \\
	+\infty, & \text{otherwise},
\end{cases}
\end{align*}
where $\prec$ denotes absolute continuity of $\mu$ w.r.t. $\nu$, and $\d\mu/\d\nu$ is the Radon--Nikodym derivative. It is always nonnegative, and is equal to zero if and only if $\mu \equiv \nu$. The {\em Shannon mutual information} \cite{PinskerBook} in $(\mu,K) \in \cP(\sX) \times \cM(\sY|\sX)$ is defined as
\begin{equation}
	I(\mu,K) \deq D(\mu \otimes K \| \mu \otimes \mu K). 
\label{e:ShannonI}
\end{equation}
If $(X,Y)$ is a pair of random objects with $\law(X,Y) = \mu \otimes K$, then we will also write $I_{\mu,  K}(X; Y)$  to denote \eqref{e:ShannonI}. We will use standard identities for the mutual information and for the conditional mutual information, as can be found in \cite{cover2012elements}. We work with natural logarithms throughout the paper, so all entropies and mutual information is measured in \textit{nats}. 


\section{Problem formulation}

\label{ssec:formulate}

We now provide the precise formulation of the problem of sequential empirical coordination informally stated in Section~\ref{ssec:introduction}. The objective is for the sender to use minimal communication resources, so that the joint empirical distributions of the states observed by the sender and the actions generated by the receiver can mimic a given process law subject to a fidelity criterion. Since this problem involves causality considerations, we need to introduce the definition of a \textit{directed stochastic kernel} (see \cite{TatikondaThesis,tatikondamitter2009feedback} for a detailed presentation in the context of control and feedback information theory):

\begin{definition} \label{def:causcaus} Let $\sY_1,\ldots,\sY_M$ be a collection of Borel spaces. For any any $I \subseteq [M]$, let $\sY_I \deq \prod_{i \in I}\sY_i$. Fix any set $I = \{i_1,\ldots,i_K\}$ with $i_1 < i_2 < \ldots < i_K$, and let $I^c$ denote the complementary set $[M] \setminus I$. A \textit{directed stochastic kernel} between $\sY_{I^c}$ and $\sY_I$ is an element of $\cM(\sY_I|\sY_{[i_K-1]})$ that has the form
\[
K(\d y_I | y_{[i_K-1]}) = \bigotimes^K_{k=1} T_k(\d y_{i_k}|y_{[i_k-1]})
\]	
for a given collection of Markov kernels $T_k \in \cM(\sY_{i_k}|\sY_{[i_k-1]})$, $k = 1,\ldots,K$. We will denote the space of all such kernels by $\overrightarrow{\cM}(\sY_I|\sY_{I^c})$.
\end{definition}
\begin{remark}{\em {It is important to keep in mind that $\sY_{I^c}$ is not the input space of an element of $\overrightarrow{\cM}(\sY_I|\sY_{I^c})$. Rather, this notation is meant to distinguish the control variables from the observation variables, as explained in detail below.}}
\end{remark}
\noindent This definition naturally incorporates causality constraints. If we think of the index $i \in [M]$ as time and let the times $i_k \in I$ denote the instants when an action must be taken, then the Markov kernel $T_k$ prescribes the stochastic law for taking a random action at time $i_k$ on the basis of the `past' data $y_1,y_2,\ldots,y_{i_k-1}$. The stochastic kernel $K$ describes the overall sequential process of taking actions. A canonical way in which such stochastic kernels arise is to start with a random element $Y_{[M]}=(Y_1,\ldots,Y_M)$ of $\sY_{[M]}$ with probability law $\Pr$ and, for each $k \in [K]$, take $T_k$ to be the regular conditional probability distribution $\Pr_{Y_{i_k}|Y_{[i_k-1]}}$. This construction yields the  directed stochastic kernel
\[
 K_ (\d y_{I} | y_{[i_K-1]} ) \deq \displaystyle\bigotimes_{ k =1}^{K} \Pr_{Y_{i_k} | Y_{[i_k - 1]}} (\d y_{i_k} | y_{[i_k - 1]}).
\]

For the problem of empirical coordination, fix the state space $\sX$, the action space $\sU$, and the time horizon $T$. For each $t \in [T]$, introduce the copies $\sX_t$ and $\sU_t$ of $\sX$ and $\sU$, respectively. Let $\mu \in \cP(\sX_{[T]})$ denote the probability law of the state process $X_{[T]}=(X_1,\ldots,X_T)$, which can be disintegrated as the product of $T$ factors $\mu^{(t)} \in \cM(\sX_t | \sX_{[t-1]} )$:
$$
\mu(\d x_{[T]}) = \bigotimes^T_{t=1} \mu^{(t)}(\d x_t|x_{[t-1]}).
$$
Furthermore, let $\pi \in \overrightarrow{\cM}(\sU_{[T]} |\sX_{[T]})$ denote the directed stochastic kernel whose factors $\pi^{(t)} \in \cM(\sU_t| \sX_{[t]} \times \sU_{[t-1]})$  prescribe the causal policy, according to which the DM takes target actions in $\sU$ based on the past history of states and actions. The resulting joint probability law of states and actions, the so-called {\itshape strategic measure} $\mathbf{P}_\mu^\pi \in \cP(\sX_{[T]} \times \sU_{[T]})$, is given by 
\begin{equation}
\begin{split}
& \mathbf{P}_\mu^\pi (\d x_{[T]} , \d u_{[T]}) \\
& \deq \mu(\d x_{[T]}) \otimes \displaystyle\bigotimes_{t \in [T]} \pi^{(t)} (\d u_t | x_{[t]} , u_{[t-1]} ) \nonumber \\
&= \bigotimes_{t \in [T]} \mu^{(t)} (\d x_t | x_{[t-1]}) \otimes  \pi^{(t)} (\d u_t | x_{[t]} , u_{[t-1]} ) .
\end{split}
\end{equation}
The marginal law of $X_t$ under $\mathbf{P}_\mu^\pi$ will be denoted by $\mu_t \in \cP(\sX)$, and the marginal conditional law of $U_t$ given $X_t$ by $\pi_t \in \cM(\sU | \sX)$. 

Let $X_{[T],[N]} \deq \{X_{t,n} : t \in [T], n \in [N]\}$ be a $T \times N$ array of random elements of $\sX$, where $t\in [T]$ denotes the time index, while $n \in [N]$ enumerates the copies of $\mu$. For all $A \subset [T]$ and $B \subset [N]$, we will denote by $X_{A,B}$ the sub-array $(X_{t,n} : t \in A ; n \in B)$. For each $n \in [N]$ the columns $X_{[T],n} = (X_{t,n})_{t \in [T]}$ are i.i.d.\ copies of the state process with law $\mu$.  Similarly, let $\{ U_{t,n} : t \in [T] ; n \in [N] \}$, denoted by $U_{[T],[N]}$, be an array of random elements of $\sU$ such that for $n \in [N]$ the pair process $(X_{[T], n} , U_{[T], n})$ are i.i.d.\ copies of the state-action process with law $\mathbf{P}_\mu^\pi$. Notice that both arrays $X_{[T],[N]}$ and $U_{[T],[N]}$ are independent across $[N]$ while correlated across $[T]$. The empirical distribution of state-action pairs at time $t$ is given by
\[
\sP_{X_{t, [N]} , U_{t, [N]}} \deq \frac{1}{N}\sum_{n \in [N]}\delta_{X_{t,n} , U_{t,n}}.
\] 
Then, for any uGC class $\cF$, 
\begin{align}\label{eq:t_consistency}
\left\| \sP_{X_{t,[N]},U_{t,[N]}} - \mu_t \otimes \pi_t \right\|_\cF \xrightarrow{\text{\,\,a.s.\,\,}} 0 \qquad \text{as $N \to \infty$}.
\end{align}
Since \eqref{eq:t_consistency} holds for every $t \in [T]$, we have
\[
\frac{1}{T}\sum^T_{t=1} \left\| \sP_{X_{t,[N]},U_{t,[N]}} - \mu_t \otimes \pi_t \right\|_\cF \xrightarrow{\text{\,\,a.s.\,\,}} 0 \qquad \text{as $N \to \infty$}.
\]
That is, the realized empirical distributions of states and actions will be asymptotically consistent with the strategic measure $\mathbf{P}^\pi_\mu$, uniformly over $\cF$.

We now consider the following sequential coding problem involving an information sender (IS) and a decision-maker (DM). The IS can transmit messages to the DM over a finite-capacity channel. At each time $t$, the IS observes the state realizations $X_{[t] , [N]}$ and sends a message to the DM who will use this message and all previously received messages to generate the new $N$-tuple of actions $U_{t,[N]}$ using a deterministic policy. The goal is to ensure that the realized empirical distributions of states and actions approximate the strategic measure ${\mathbf P}^\pi_\mu$ to a given accuracy, while minimizing the communication resources. We will assess the quality of approximation using a fixed uGC class of test functions, while the communication resources will be measured in terms of the overall Shannon entropy of the messages sent by the IS to the DM.

\begin{definition}A {\itshape sequential N-code} is a collection $\gamma = (\gamma_t)_{t \in [T]}$ of measurable mappings
\[
\gamma_t : \sX_{[t] , [N]} \rightarrow \sU_{t, [N]}
\]
 with countable ranges.
 \end{definition}
  
\noindent Given a state process law $\mu \in \cP(\sX_{[T]})$ and an $N$-code $\gamma$, we let $\mathbf{P}_\mu^\gamma \in  \cP (\sX_{[T],[N]} \times \sU_{[T],[N]}) $ denote the induced strategic measure, i.e., joint probability law of the states observed by the IS and the actions generated by the DM:
\begin{align}
\begin{split}
& \mathbf{P}_\mu^\gamma (\d x_{[T] , [N]} , \d u_{[T] , [N]} )  \\
& = \displaystyle\bigotimes_{n \in [N]} \mu (\d x_{[T] , n}) \otimes \displaystyle\bigotimes_{t \in [T]} \delta_{\gamma_t(x_{[t] , [N]} ) }(\d u_{t , [N]} )    \label{eq:ULLN_5} .
\end{split}
\end{align}
We are interested in the \textit{minimum} information transmission rate needed by a sequential $N$-code in order to ensure that the realized sequence of states observed by the sender and actions taken by the decision-maker is $\Delta$-consistent (in expectation) with the target measure ${\Pr}_{\mu}^{\pi}$ on a fixed but arbitrary uGC class $\cF \subset M^1_b(\sX \times \sU)$. That is, we wish to design $\gamma$, so that
\begin{align}\label{eq:Delta_criterion}
\frac{1}{T}\sum^T_{t=1}\E^\gamma_\mu \left\| \sP_{X_{t,[N]},U_{t,[N]}}-\mu_t \otimes \pi_t \right\|_\cF \le \Delta,
\end{align}
while minimizing the total Shannon entropy of the messages $U_{1,[N]} = \gamma_1(X_{1,[N]}),\ldots,U_{T,[N]} = \gamma_T(X_{[T],[N]})$. Let
\begin{align}
\label{eq:rdrdrd}
\Gamma^N_{\mu , \pi} (\Delta) & \deq  \Big\{ \gamma = (\gamma_t)_{t \in [T]} : \nonumber\\
& \qquad  \frac{1}{T} \sum_{t \in [T]} \E^\gamma_\mu \left\Vert \mathrm{P}_{X_{t, [N]} , U_{t, [N]}} - \mu_t \otimes \pi_t \right\Vert_{\mathcal{F}} \le \Delta  \Big\}
\end{align}
be the set of all sequential $N$-codes that meet the criterion in \eqref{eq:Delta_criterion}. With this, we define the {\itshape operational sequential rate-distortion function for empirical coordination}:
\begin{equation}
\widehat{R}_{T,N}(\Delta)\triangleq\inf_{\gamma\in\Gamma^N_{\mu, \pi}(\Delta)}\frac{H\big(U_{[T],[N]}\big)}{NT}, \label{eq:opSRDF}
\end{equation}
where $H(U_{[T],[N]})$ is the joint Shannon entropy of the actions generated by the IR using $\gamma$.

{Our use of uGC classes in an operational criterion for coordination is inspired by the work of Al-Najjar \cite{alnajjar2009}, who analyzes the quality of forecasts or policy decisions made on the basis of estimating the probabilities of a whole class of events simultaneously from observed empirical frequencies. This amounts to evaluating the uniform deviation between the empirical probabilities and the `true' probabilities over a class $\cA$ of measurable sets. In order for the estimate to be consistent, the class of all indicator functions of the sets in $\cA$ must be a uGC class (which is equivalent to $\cA$ being a so-called Vapnik--Chervonenkis class of sets). Al-Najjar considers the case when the decision-makers have direct observation of all the relevant data. We are extending Al-Najjar's framework in three key ways:
\begin{itemize}
	\item We are considering arbitrary uGC classes, not just classes of indicator functions.
	\item We are imposing an information constraint (i.e., the state processes must be communicated to the DM over a finite-capacity channel).
	\item We are considering the sequential set-up, where, for each $n$, one must make $T > 1$ decisions, contingent on previously made decisions and the history of states. 
\end{itemize}}


\section{Main results}
\label{ssec:mmain}

Our main result addresses two questions pertaining to the operational rate-distortion function defined in \eqref{eq:opSRDF}:
\begin{enumerate}
	\item What is the minimum information transmission rate needed for IS to induce an empirical state-action distribution that is $\Delta$-consistent (in expectation) with the target measure ${\mathbf P}^\pi_\mu$?
	\item Can this minimum rate can be achieved by sequential $N$-codes?
\end{enumerate}
In order to address these questions, we first introduce an information-theoretic counterpart of \eqref{eq:opSRDF}, which we refer to as the {\itshape sequential rate-distortion function for empirical coordination}. 

Consider the subset of $\overrightarrow{\cM}(\sU_{[T]} | \sX_{[T]})$ consisting of those directed stochastic kernels whose induced marginal distributions of $(X_t , U_t)$ at each $t \in [T]$ are $\Delta$-consistent with $\mathbf{P}_\mu^\pi$, on average:
\begin{align*}
\label{eq:rdrd}
& \Pi_{\mu , \pi} (\Delta) \deq \Big\{ \tpi \in \overrightarrow{\cM}(\sU_{[T]} | \sX_{[T]}) :  \nonumber\\
& \quad \frac{1}{T} \sum_{t=1}^T \left\Vert \mu_t \otimes \tpi_t - \mu_t \otimes \pi_t \right\Vert_{\mathcal{F}} \le \Delta \Big\};
\end{align*}
here, given $\tpi \in \overrightarrow{\cM}(\sU_{[T]}|\sX_{[T]})$, $\tpi_t \in \cM(\sU|\sX)$ denotes the induced conditional distribution of $U_t$ given $X_t$. Then, the \textit{sequential rate-distortion function for empirical coordination} is defined as
\begin{equation}
\label{eq:rrrd}
R_{T}(\Delta) \deq \inf_{\tpi \in \Pi_{\mu , \pi}(\Delta)}\frac{I_{\mu,\tpi}(X_{[T]} ; U_{[T]})}{T}.
\end{equation}
\begin{remark} {\em For any $\tpi \in \overrightarrow{\cM}(\sU_{[T]} | \sX_{[T]})$, $U_t$ and $(X_{t+1} , \dots, X_T)$ are conditionally independent given $(X_{[t]} , U_{[t-1]})$ for each $t \in [T]$. Using this fact and the chain rule for mutual information, we can write
	\begin{align}
		I_{\mu,\tpi}(X_{[T]} ; U_{[T]}) &=\sum_{t \in [T]}  I_{\mu , \tpi} (X_{[T]} ; U_t | U_{[t-1]} ) \nonumber\\
		&=\sum_{t \in [T]}  I_{\mu ; \tpi} (X_{[t]}; U_t | U_{[t-1]}), \label{eq:dirinfo}
	\end{align}
where the quantity in \eqref{eq:dirinfo} is the {\em directed information} $I_{\mu,\tpi}(X_{[T]}\to U_{[T]})$ \cite{TatikondaThesis}. Thus, we can express the rate-distortion function in \eqref{eq:rrrd} as
\begin{align}\label{eq:rrrd_di}
	R_T(\Delta) = \inf_{\tpi \in \Pi_{\mu,\pi}(\Delta)} \frac{I_{\mu,\tpi}(X_{[T]}\to U_{[T]})}{T}.
\end{align}
Thus, $R_T(\Delta)$ is the empirical coordination counterpart of the sequential rate-distortion function \cite{TatikondaThesis,tatikondamitter2009feedback}. Equation \eqref{eq:rrrd_di} conveys an important intuitive concept, beyond the {\itshape common information content} embodied in the concept of mutual information. In a stochastic dynamical system, past states and actions convey information about the  current state. Each of the terms $I_{\mu , \tpi} (X_{[t]} ; U_t | U_{[t-1]})$ denotes the amount of information that needs to be conveyed about the state history $X_{[t]}$ beyond what is contained in $U_{[t-1]}$ in order to pin down the current action $U_t$.
\hfill$\square$}
\end{remark}

The rate-distortion function $R_{T}(\Delta)$ gives the smallest amount of information that any causal policy must convey about the sequence of states, on average per unit time, in order for the resulting joint measure to be $\Delta$-consistent (in expectation) with the postulated target measure ${\Pr}_{\mu}^{\pi}$ on the class $\cF$. Theorems~\ref{thm:mainn} and \ref{thm:converseee} below state that the sequential rate-distortion function for empirical coordination defined in \eqref{eq:rrrd} is the {\itshape asymptotic fundamental limit} of the empirical coordination problem formulated in Section~\ref{ssec:formulate}. {Note that the operational performance criterion in \eqref{eq:Delta_criterion} is \textit{non-additive} in $n$. Nevertheless, as evident from the two theorems below, the information-theoretic expression for the fundamental limit of sequential empirical coordination does not involve any limit as $N \to \infty$.}

\begin{theorem}[Achievability]
\label{thm:mainn}
Suppose $R_T(\Delta ) < \infty$. Then, for each $\varepsilon > 0$, there exists $ N(\varepsilon) \in \mathbb{N}$, such that 
\begin{equation*}
\widehat{R}_{T , N}(\Delta + \varepsilon ) \le R_T(\Delta) + \varepsilon.
\end{equation*}
In other words, under the conditions of the theorem, for each sufficiently large $N$, we can find a sequential $N$-code in $\Gamma_{\mu , \pi}^N(\Delta + \varepsilon)$, whose output entropy (normalized by $NT$) is approximately bounded by $ R_T(\Delta)$.
\end{theorem}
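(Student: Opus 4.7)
The plan is to combine a near-optimal choice of directed kernel with a tree-structured random code of the type used in \cite{TatikondaThesis}, then derandomize. First, I would fix $\delta > 0$ small enough that the slack I accumulate can be absorbed into $\varepsilon$, and pick a directed kernel $\tpi^* \in \Pi_{\mu,\pi}(\Delta)$ that nearly attains the infimum in \eqref{eq:rrrd}, so that, using the factorization in \eqref{eq:dirinfo},
\[
\frac{1}{T}\sum_{t \in [T]} I_{\mu,\tpi^*}(X_{[t]}; U_t \mid U_{[t-1]}) \le R_T(\Delta) + \delta.
\]
The goal then becomes to produce a deterministic sequential $N$-code whose induced empirical behavior on $\cF$ mimics $\mathbf{P}_\mu^{\tpi^*}$ while spending, per time step, no more than roughly $I_{\mu,\tpi^*}(X_{[t]}; U_t \mid U_{[t-1]})$ nats of description.

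Next, I would construct a random tree code. Set $M_t \deq \lceil \exp\bigl(N(I_{\mu,\tpi^*}(X_{[t]}; U_t \mid U_{[t-1]}) + \delta)\bigr)\rceil$. At the root, draw $M_1$ candidate blocks $\tilde U_{1,[N]}(m_1)$ with each coordinate i.i.d.\ from the marginal $\mu_1 \tpi^*_1$-induced law of $U_1$. Recursively, at every node $(m_1,\ldots,m_{t-1})$ of depth $t-1$, draw $M_t$ conditionally independent candidate blocks $\tilde U_{t,[N]}(m_1,\ldots,m_t)$ whose $n$-th coordinate is drawn from the conditional law of $U_t$ given the already-generated realization $\tilde U_{[t-1],n}(m_1,\ldots,m_{t-1})$ under $\tpi^*$. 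The deterministic encoder chosen from this random ensemble proceeds stage by stage: having observed $X_{[t],[N]}$ and selected $m_1,\ldots,m_{t-1}$, it sets $m_t$ to minimize $\bigl\| \sP_{X_{t,[N]}, \tilde U_{t,[N]}(m_1,\ldots,m_t)} - \mu_t \otimes \tpi^*_t \bigr\|_{\cF}$ (breaking ties lexicographically); the decoder outputs $U_{t,[N]} = \tilde U_{t,[N]}(m_1,\ldots,m_t)$.

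The heart of the argument is showing that this scheme meets the distortion criterion on average. Here I would invoke the uGC-typicality machinery of Appendix~\ref{app:uGCappendix} together with the two technical lemmas of Appendix~\ref{app:lemmas}, which play the role of a uGC analogue of the classical type-covering lemma. Inductively along the tree, I would argue that with probability tending to $1$ (over the source and the random codebook) the minimizing $m_t$ achieves $\|\sP_{X_{t,[N]}, \tilde U_{t,[N]}} - \mu_t \otimes \tpi^*_t\|_\cF \le \varepsilon/(2T)$ for every $t$, using that $M_t$ is strictly above the conditional mutual information threshold. Summing over $t$, dividing by $T$, and combining with $\|\cdot\|_\cF \le 2$ on the bad event (since $\cF \subset \{f : \|f\|_\infty \le 1\}$), Markov's inequality upgrades this high-probability bound to the in-expectation criterion \eqref{eq:Delta_criterion} with $\Delta$ replaced by $\Delta + \varepsilon$. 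Standard derandomization (averaging over the code ensemble) then yields a single deterministic code $\gamma \in \Gamma^N_{\mu,\pi}(\Delta + \varepsilon)$ for all $N$ beyond some $N(\varepsilon)$. The entropy side is easy: under the deterministic encoder $U_{[T],[N]}$ is a function of $(M_1,\ldots,M_T)$, so
\[
\frac{H(U_{[T],[N]})}{NT} \le \frac{1}{NT}\sum_{t\in[T]}\log M_t \le R_T(\Delta) + 2\delta,
\]
and a final choice $\delta \le \varepsilon/2$ delivers the theorem.

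The step I expect to be the main obstacle is the inductive uGC covering along the tree: showing that the conditional law used to populate each subtree, paired with a typical block of states, produces, with high probability and among only $M_t$ candidates, an action block whose joint empirical measure with $X_{t,[N]}$ is $\cF$-close to $\mu_t \otimes \tpi^*_t$. This is subtle because the ``quality'' of the past codeword feeds into the conditional law used at stage $t$, so one must verify that the propagation of empirical deviations along the tree does not blow up. I expect the appendix lemmas to be tailored exactly for this step, providing a uGC substitute for the finite-alphabet type-covering estimates that would otherwise appear here.
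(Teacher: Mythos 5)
Your proposal matches the paper's proof in essence: pick a near-optimal directed kernel $\tpi$ (the paper takes $\tpi \in \Pi_{\mu,\pi}(\Delta+\varepsilon/2)$, you take $\tpi^* \in \Pi_{\mu,\pi}(\Delta)$ — both work), build a random tree code with roughly $e^{N I_{\mu,\tpi}(X_{[t]};U_t|U_{[t-1]})}$ leaves at each depth, use the uGC/random-coding machinery of Lemmas~\ref{thm:quantization} and~\ref{lem:quant} to show the expected $\cF$-deviation from $\mu_t\otimes\tpi_t$ vanishes, derandomize, bound the output entropy by $\sum_t \log(M_t+1)$, and finish with the triangle inequality. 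The only cosmetic deviation is your "argmin" encoding rule versus the paper's "pick any codeword avoiding the bad set $A_{t,N}$ (else output the fixed error word $u_{[N]}(0)$)" — the latter is what makes the union-bound/recursion in Lemma~\ref{lem:quant} clean — but both rules are sound for an existence proof.
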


\begin{proof} {All of the heavy lifting needed in the proof is contained in two technical lemmas presented in Appendix~\ref{app:lemmas}. The key step is taken care of by Lemma~\ref{thm:quantization}, which extends the so-called Piggyback Coding Lemma \cite[Lemma~A.1]{raginsky2013empirical} to the sequential case. This lemma, in turn, relies on Lemma~\ref{lem:quant}, which provides a random coding argument along the lines of \cite[Lemma~9.3.1]{gallager1968information} for tree codes (a natural choice in the presence of causality constraints).  With these two lemmas at hand, the achievability proof is conceptually transparent.}
	
Since $R_T(\Delta) < \infty$, there exists some $\tpi \in \Pi_{\mu , \pi} (\Delta + \frac{\varepsilon}{2})$ such that
\[
\frac{1}{T}  I_{\mu ; \tpi}(X_{[T]} ; U_{[T]}) < R_T(\Delta) + \frac{\varepsilon}{2}.
\]
For each $t$, define the function
 \[
 \psi_{t , N}(x_{t,[N]};u_{t,[N]}) \deq  \left\Vert \sP_{x_{t,[N]},u_{t ,  [N]}} - \mu_t \otimes \tpi_t \right\Vert_{\mathcal{F}}.
\] 
Since $\cF$ is a uGC class and since $(X_{t,1},U_{t,1}),\ldots,(X_{t,N},U_{t,N})$ are i.i.d.\ under ${\mathbf P}^{\tpi}_\mu$, we have
\begin{equation*}
\displaystyle\max_{t \in [T]} \lim_{N \rightarrow \infty} \E_{\mu}^{\tpi} \left[ \psi_{t,N} (X_{t,[N]},U_{t ,  [N]}) \right] = 0.
\end{equation*}
Then by Lemma \ref{thm:quantization} in the appendix, there exists a sequential $N$-code $\gamma \in \Gamma^N_{\mu , \tpi} ( \frac{\varepsilon}{2})$, such that, for $U_{[T],[N]} = \big(\gamma_t (X_{[t] , [N]})\big)_{t \in T}$, we have
\begin{align*}
 \frac{H(U_{[T],[N]})}{NT} & \le \frac{1}{T} I_{\mu , \tpi}(X_{[T]} ; U_{[T]})  + \frac{\varepsilon}{2}  < R_T(\Delta) + \varepsilon .
\end{align*}
Moreover, using the triangle inequality, we can estimate
\begin{align*}
& \frac{1}{T} \sum_{t \in [T]} \E^{\gamma}_{\mu} \left\Vert \sP_{X_{t,[N]} , U_{t, [N]} } - \mu_t \otimes \pi_t \right\Vert_{\mathcal{F}} \\
 &\le   \frac{1}{T} \sum_{t \in [T]} \E^{\gamma}_\mu\left\Vert \sP_{X_{t,[N]} , U_{t, [N]}}  - \mu_t \otimes \tpi_t \right\Vert_{\mathcal{F}} + \frac{1}{T} \sum_{t \in [T]} \left\Vert \mu_t \otimes \tpi_t - \mu_t \otimes \pi_t \right\Vert_{\mathcal{F}} \\
   &\le \Delta + \varepsilon .
\end{align*}
Thus, $\gamma \in \Gamma^N_{\mu,\pi}(\Delta+\eps)$, and therefore, from the definition of $\widehat{R}_{T,N}(\cdot)$, it follows that
\[
\widehat{R}_{T,N} (\Delta + \varepsilon) \le \frac{H(U_{[T],[N]})}{NT}  \le  R_T(\Delta) + \varepsilon .
\]

\end{proof}

\begin{theorem}[Converse]
\label{thm:converseee}
For any $N$, $T$ and $\Delta$,  
\begin{equation*}
\widehat{R}_{T , N}(\Delta ) \ge R_{T }(\Delta).
\end{equation*}
In other words, the average output entropy of any $N$-code $\gamma \in \Gamma^N_{\mu, \pi} (\Delta)$ must be at least as large as $R_T(\Delta)$.
\end{theorem}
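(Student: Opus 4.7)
The plan is to reduce any sequential $N$-code $\gamma \in \Gamma_{\mu,\pi}^N(\Delta)$ to a single-copy directed kernel $\tilde\pi \in \Pi_{\mu,\pi}(\Delta)$ whose mutual information satisfies $T\cdot I_{\mu,\tilde\pi}(X_{[T]};U_{[T]}) \le H(U_{[T],[N]})/N$; combined with the definition of $R_T(\Delta)$, this immediately gives $\widehat R_{T,N}(\Delta) \ge R_T(\Delta)$. The first step is a per-column extraction. For each $n \in [N]$, I want to identify the marginal law of $(X_{[T],n},U_{[T],n})$ under $\mathbf{P}_\mu^\gamma$ with $\mu\otimes\tilde\pi^{(n)}$ for some directed kernel $\tilde\pi^{(n)} \in \overrightarrow{\cM}(\sU_{[T]}|\sX_{[T]})$. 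Although $U_{t,n}$ depends on the full slab $X_{[t],[N]}$, the remaining columns $X_{[T],[N]\setminus\{n\}}$ are independent of $X_{[T],n}$ by the i.i.d.\ structure across $[N]$. Combined with the causality of $\gamma_t$, this yields the conditional independence $(U_{[t-1],n},U_{t,n}) \perp X_{\{t+1,\ldots,T\},n} \mid X_{[t],n}$, which is exactly what is needed to promote the conditional law of $U_{t,n}$ given $(X_{[t],n},U_{[t-1],n})$ to a bona fide directed-kernel factor $\tilde\pi^{(n),t}(\d u_t \mid x_{[t]},u_{[t-1]})$.

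Next, set $\tilde\pi \deq \frac{1}{N}\sum_n \tilde\pi^{(n)}$. Because each partial marginal $\tilde\pi(\d u_{[s]}\mid x_{[T]}) = \frac{1}{N}\sum_n \prod_{t\le s}\tilde\pi^{(n),t}(\d u_t\mid x_{[t]},u_{[t-1]})$ depends on $x_{[T]}$ only through $x_{[s]}$, the mixture $\tilde\pi$ is again directed. Moreover, $\mu_t\otimes\tilde\pi_t = \frac{1}{N}\sum_n \mu_t\otimes\tilde\pi_t^{(n)} = \E_\mu^\gamma\sP_{X_{t,[N]},U_{t,[N]}}$, so Jensen's inequality applied to the convex seminorm $\|\cdot\|_\cF$ yields
\[\frac{1}{T}\sum_{t\in[T]} \bigl\|\mu_t\otimes\tilde\pi_t - \mu_t\otimes\pi_t\bigr\|_\cF \le \frac{1}{T}\sum_{t\in[T]} \E_\mu^\gamma\bigl\|\sP_{X_{t,[N]},U_{t,[N]}} - \mu_t\otimes\pi_t\bigr\|_\cF \le \Delta,\]
so $\tilde\pi \in \Pi_{\mu,\pi}(\Delta)$.

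For the information bound I would chain three standard inequalities: convexity of $K\mapsto I(\mu,K)$ gives $I_{\mu,\tilde\pi}(X_{[T]};U_{[T]}) \le \frac{1}{N}\sum_n I(X_{[T],n};U_{[T],n})$; independence of the columns $X_{[T],1},\ldots,X_{[T],N}$ together with subadditivity of conditional entropy gives $\sum_n I(X_{[T],n};U_{[T],n}) \le I(X_{[T],[N]};U_{[T],[N]})$; and the inequality $I \le H$ gives $I(X_{[T],[N]};U_{[T],[N]}) \le H(U_{[T],[N]})$. Combining these with $T\cdot R_T(\Delta) \le I_{\mu,\tilde\pi}(X_{[T]};U_{[T]})$ and dividing through by $NT$ yields $\widehat R_{T,N}(\Delta) \ge R_T(\Delta)$. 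The main delicate point is the first step: without exploiting the cross-column independence, the per-column object $\tilde\pi^{(n)}$ would not be directed, and the single-copy reduction on which the rest of the proof rests would collapse.
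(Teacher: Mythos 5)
Your proof is correct and is essentially the paper's argument: where you average the per-column directed kernels to form $\tilde\pi = \tfrac{1}{N}\sum_{n}\tilde\pi^{(n)}$, the paper introduces a uniformly random column index $J$ independent of the data and takes $\tpi^{(t)}$ to be the conditional law of $U_{t,J}$ given $(X_{[t],J},U_{[t-1],J})$, which yields the same kernel, and your use of convexity of $I(\mu,\cdot)$ in the kernel is the paper's chain $\sum_{n} I(X_{[T],n};U_{[T],n}) = N I(X_{[T],J};U_{[T],J}|J) = N I(X_{[T],J};U_{[T],J},J) \ge N I_{\mu,\tpi}(X_{[T]};U_{[T]})$, using $J \perp X_{[T],J}$. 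The remaining ingredients --- the conditional-independence (weak-union) observation that lets the per-column conditionals factor causally, the identity $\mu_t\otimes\tilde\pi_t = {\mathbb E}^\gamma_\mu \sP_{X_{t,[N]},U_{t,[N]}}$, the Jensen step for $\tilde\pi\in\Pi_{\mu,\pi}(\Delta)$, and the inequalities $\sum_n I(X_{[T],n};U_{[T],n}) \le I(X_{[T],[N]};U_{[T],[N]}) = H(U_{[T],[N]})$ --- all match the paper's proof.
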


\begin{proof}The proof uses the techniques from \cite{raginsky2013empirical}. Fix an arbitrary sequential $N$-code $\gamma \in \Gamma^{N}_{\mu, \pi} (\Delta)$, and let $(X_{[T] , [N]} , U_{[T] , [N]} )$ the state-action process with process law $\Pr_\mu^{\gamma}$. Let $J$ be a random variable uniformly distributed on $[N]$, independently of $(X_{[T] , [N]} ,U_{[T] , [N]})$. Consider the random couple $( X_{[T] , J} , U_{[T] , J} )$. From symmetry and independence, it follows that the marginal distribution of $X_{[T] , J}$ is equal to $\mu$. For each $t \in [T]$, let $\tpi^{(t)} \in \cM(U| X_{[t]} \times U_{[t-1]})$ be the induced conditional law of $U_{t,J}$ given $( X_{[t],J} , U_{[t-1],J} )$, and let $\tpi_t \in \cM(\sU | \sX)$ denote the induced conditional law of $U_{t,J}$ given $X_{t,J}$. Then we have the following chain of equalities and inequalities: 
\begin{equation*}
\begin{split}
 H \left( (\gamma_t(X_{[t],[N]}) )_{t \in [T]} \right) = & H(U_{[T], [N]}) \\
  \overset{\mathrm{(a)}}{=}   & I(X_{[T], [N]} ; U_{[T], [N]} )\\
 \overset{\mathrm{(b)}}{\ge} & \sum_{n \in [N]} I(X_{[T], n} ; U_{[T] , n}) \\
\overset{\mathrm{(c)}}{=} &  NI(X_{[T] , J} ; U_{[T] , J} |J) \\
\overset{\mathrm{(d)}}{=} & NI(X_{[T] , J} ; U_{[T] , J} , J) \\
 \ge & N I_{\mu , \tpi} (X_{[T]} ; U_{[T]}) ,
\end{split}
\end{equation*}
where:
\begin{itemize}
\item (a) follows from the fact that the map $X_{[T],[N]} \rightarrow U_{[T],[N]}$
is deterministic;
\item (b) is a standard information-theoretic fact: if $X_{[N]}$ is a sequence of independent random variables, then for any sequence $Y_{[N]}$ of random variables jointly distributed with the $X_{n}$'s, 
\[
\sum_{n \in [N]}I(X_{n}; Y_{n}) \le I(X_{[N]};Y_{[N]});
\]

\item (c) follows from the construction of $J$;
\item (d) follows from the fact that, since $\{X_{[T],1}, \ldots , X_{[T],N} \}$ are i.i.d., $J$ and $X_{[T],J}$ are independent (see Appendix~B
in \cite{raginsky2013empirical}), and from the chain rule for the mutual infornation.
\end{itemize}
The remaining steps are consequences of definitions and of standard information-theoretic identities. Dividing both sides by $NT$, we obtain the bound 
$$
 \frac{I_{\mu , \tpi} (X_{[T]} ; U_{[T]})}{T} \le \frac{H \left( U_{[T],[N]} \right)}{NT}. 
$$
Now, for each $t \in [T]$, $X_{t,J}$ is independent of $J$, and has the same law as $X_{t,1}$, namely $\mu_t$. Moreover, (cf.~Appendix~B in \cite{raginsky2013empirical}),
the expected empirical distribution ${\mathbb{E}}_{\mu}^{\gamma} \sP_{X_{t,[N]}, U_{t,[N]}}$
is equal to $\mu_{t} \otimes \tpi_{t}$. Then we have 
\begin{equation*}
\begin{split}
&\sum_{t \in [T]}\left\Vert \mu_{t}\otimes\pi_{t}-\mu_{t}\otimes\tpi_{t}\right\Vert _{\mathcal{F}} \\
& = \sum_{t \in [T]}\left\Vert {\mathbb{E}}_{\mu}^{\gamma} \sP_{X_{t,[N]}, U_{t,[N]}}-\mu_{t}\otimes\tpi_{t}\right\Vert _{\mathcal{F}}\\
 & \leq \sum_{t \in [T]}{\mathbb{E}}_{\mu}^{\gamma} \left\Vert \sP_{X_{t,[N]}, U_{t,[N]}}-\mu_{t}\otimes\tpi_{t}\right\Vert _{\mathcal{F}}\\
 & \leq\Delta,
\end{split}
\end{equation*}
where the first inequality is by convexity, while the second inequality is by assumption on $\gamma$. Therefore, $\tpi = \left(\tpi^{(t)} \right)_{t \in [T]} \in \Pi_{\mu, \pi} (\Delta)$, and consequently
\[
R_T(\Delta) \leq \frac{I_{\mu , \tpi} (X_{[T]} ; U_{[T]})}{T} \leq \frac{H \left( U_{[T],[N]}\right)}{NT} ,
\]
by definition. Since this holds for every $\gamma \in \Pi^N_{\mu , \pi} (\Delta)$, it follows that $ R_T(\Delta) \le \widehat{R}_{T,N} (\Delta) $. \end{proof}

 \section{Examples and bounds}
 \label{sec:implicc}

Although Theorems \ref{thm:mainn} and \ref{thm:converseee} provide a full characterization of the fundamental limits on the minimal rate of communication for sequential empirical coordination, the computation of the sequential rate distortion function $R_T(\Delta)$ is a complicated  optimization problem already in the static $(T = 1)$ case, which was addressed in \cite{raginsky2013empirical}. Below, we provide two examples that illustrate the difficulty of explicitly computing $R_T(\Delta)$ even for $T = 1$. {We also show that, in some cases, one can upper-bound $R_T(\Delta)$ by a simpler information-theoretic quantity related to remote lossy source coding.}

\subsection{ Kolmogorov-Smirnov criterion for one-step costs}
\label{sec:kolmog}

While we have remained silent on the nature of the target strategic measure $\Pr_\mu^\pi$, it may have been selected based on considerations of expected cost. Thus, suppose that we have a function $c \in M(\sX \times \sU)$, such that $c(x, u)$ gives the cost of taking action $u$ in response to state $x$. Let $\mathcal{F}$ be the class of indicator functions of the level sets of $c$:
\begin{equation}
\label{eq:indic}
f_a(x,u) \deq \1 \{c(x, u) \le a \}, \quad a \in \mathbb{R} .
\end{equation}
Then we have the following:

\begin{proposition}
Let $\cF$ denote the class of all $f$ of the form \eqref{eq:indic}. Then for any two $P , Q \in \cP(\sX \times \sU)$,
\begin{equation}
\left\Vert P - Q \right\Vert_{\cF} = d_{\rm KS} (F_{P \circ c^{-1}} , F_{Q \circ c^{-1}}),
\end{equation}
where $F_\mu$ denotes the cumulative distribution function (cdf) of a Borel probability measure $\mu$ on the reals, and
\begin{equation}
 d_{\rm KS} (F , F^{'}) \deq \sup_{a \in \mathbb{R}} | F(a) - F^{'}(a)|
\end{equation}
is the Kolomogorov-Smirnov distance between cdf's $F$ and $F^{'}$. The class $\cF$ is a universal Glivenko-Cantelli class.
\end{proposition}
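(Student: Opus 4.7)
The plan is to prove both parts by reducing everything to classical results about empirical distribution functions on the real line, via the pushforward $c_*P = P \circ c^{-1}$.

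For the first (equality) part, I would simply unpack the definition of $\langle P, f_a\rangle$. Since
\[
\langle P, f_a \rangle = \int_{\sX \times \sU} \1\{c(x,u) \le a\}\, P(\d x, \d u) = (P\circ c^{-1})((-\infty,a]) = F_{P\circ c^{-1}}(a),
\]
and likewise for $Q$, we have $\langle P - Q, f_a\rangle = F_{P\circ c^{-1}}(a) - F_{Q\circ c^{-1}}(a)$. Taking the supremum over $a \in \R$ and observing that $\cF = \{f_a : a \in \R\}$ is parametrized exactly by the thresholds that define $d_{\rm KS}$ yields the claimed identity. This step is essentially bookkeeping.

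For the second (uGC) part, the key idea is to transport the problem to the real line via $c$. Let $\mu \in \cP(\sX \times \sU)$ be arbitrary and let $Z_1, Z_2, \ldots$ be i.i.d.\ copies of $Z = (X,U) \sim \mu$. Setting $W_n \deq c(Z_n)$, the $W_n$ are i.i.d.\ real-valued random variables with common law $\mu \circ c^{-1}$. The empirical measure $\sP_{Z_{[N]}}$ pushed forward by $c$ equals the empirical measure $\sP_{W_{[N]}}$ of the $W_n$'s. Applying the equality established in the first part (with $P = \sP_{Z_{[N]}}$ and $Q = \mu$), we get
\[
\left\|\sP_{Z_{[N]}} - \mu\right\|_\cF = d_{\rm KS}\!\left(F_{\sP_{W_{[N]}}}, F_{\mu \circ c^{-1}}\right).
\]
The right-hand side tends to $0$ almost surely by the classical Glivenko--Cantelli theorem applied to the i.i.d.\ real sequence $\{W_n\}$. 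Since $\mu$ was arbitrary, this verifies Definition~\ref{def:uGC} for $\cF$.

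There is essentially no obstacle here: the proposition is a direct consequence of the classical Glivenko--Cantelli theorem once one recognizes that the seminorm $\|\cdot\|_\cF$ induced by indicator functions of sublevel sets of $c$ is exactly the Kolmogorov--Smirnov distance between the pushforwards under $c$. The only item worth emphasizing is that $\cF \subset \{f \in M(\sX \times \sU) : \|f\|_\infty \le 1\}$, which is immediate since $f_a$ takes values in $\{0,1\}$, so that the uGC framework of Definition~\ref{def:uGC} applies.
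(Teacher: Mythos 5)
Your proof is correct and follows essentially the same route as the paper: both establish the identity $\|P-Q\|_\cF = d_{\rm KS}(F_{P\circ c^{-1}},F_{Q\circ c^{-1}})$ by direct unpacking, and both deduce the uGC property by pushing forward to $\R$ via $c$ and invoking the classical Glivenko--Cantelli theorem for the real-line indicator class. You simply spell out the pushforward step (introducing $W_n = c(Z_n)$ and identifying $c_*\sP_{Z_{[N]}} = \sP_{W_{[N]}}$) more explicitly than the paper's terser version.
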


\begin{proof}
Fix any pair $P, Q \in \cP(\sX \times \sU)$. Then the chain of equalities
\begin{align*}
\left\Vert P - Q \right\Vert_{\cF} =& \sup_{a \in \mathbb{R}} |P[c(\sX , \sU) \le a] - Q[c(X , U) \le a] | \\
=& \sup_{a \in \mathbb{R}} | F_{P \circ c^{-1}} (a) -  F_{Q \circ c^{-1}} (a)| \\
=& d_{\rm KS} (F_{P \circ c^{-1}},  F_{Q \circ c^{-1}})
\end{align*}
follows from definitions. By the classical Glivenko-Cantelli theorem [14, Prop. 4.24], the class of all indicator functions $r \mapsto \1\{r \le a\}, a \in \mathbb{R}$, is a uGC class on $(\mathbb{R},\mathcal{B}(\mathbb{R}))$. Therefore, since $\{P \circ c^{-1} : P \in \cP(\sX \times \sU)\} \subset \cP(\mathbb{R})$, $\cF$ is a uGC class of functions on $\sX \times \sU$.
\end{proof}

The following is immediate from the above proposition:

\begin{theorem}
\begin{equation}
\begin{split}
R_T(\varepsilon) = & \inf_{\tpi \in \overrightarrow{\cM}(\sU^T | \sX^T) } \left\{ I_{\mu , \tpi} (X_{[T]} ; U_{[T]}) :  \right. \\ 
& \left. \frac{1}{T} \sum_{t=1}^T d_{\rm KS} (F_{(\mu_t \otimes \tpi_t) \circ c^{-1}}  , F_{(\mu_t \otimes \pi_t) \circ c^{-1}}  ) \le \Delta \right\}.
\end{split}
\end{equation}
\end{theorem}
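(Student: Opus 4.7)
The plan is to observe that this theorem is essentially a direct corollary of the preceding proposition combined with the definition of $R_T(\Delta)$ in equation \eqref{eq:rrrd}. There is no substantive new work to do beyond a careful substitution.

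First I would verify that the uGC class $\cF$ from the proposition is admissible as a test function class in the definition \eqref{eq:rrrd}: each $f_a$ is measurable and satisfies $\|f_a\|_\infty \le 1$, so $\cF \subset \{f \in M(\sX \times \sU) : \|f\|_\infty \le 1\}$, and the proposition already establishes the uGC property. This means that the setup of Section~\ref{ssec:formulate} is in force with this specific choice of $\cF$, and $R_T(\Delta)$ is defined by
\[
R_T(\Delta) = \inf_{\tpi \in \Pi_{\mu,\pi}(\Delta)} \frac{I_{\mu,\tpi}(X_{[T]};U_{[T]})}{T},
\]
where $\Pi_{\mu,\pi}(\Delta)$ is the set of $\tpi \in \overrightarrow{\cM}(\sU_{[T]}|\sX_{[T]})$ whose induced marginals $\mu_t \otimes \tpi_t$ satisfy $\frac{1}{T}\sum_t \|\mu_t \otimes \tpi_t - \mu_t \otimes \pi_t\|_\cF \le \Delta$.

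Next I would apply the identity from the proposition term by term: for each $t \in [T]$ and each $\tpi \in \overrightarrow{\cM}(\sU_{[T]}|\sX_{[T]})$,
\[
\|\mu_t \otimes \tpi_t - \mu_t \otimes \pi_t\|_\cF = d_{\rm KS}\bigl(F_{(\mu_t\otimes\tpi_t)\circ c^{-1}},\, F_{(\mu_t\otimes\pi_t)\circ c^{-1}}\bigr).
\]
Substituting this equality into the constraint defining $\Pi_{\mu,\pi}(\Delta)$ transforms it into exactly the constraint appearing in the theorem statement, so the infimum in \eqref{eq:rrrd} coincides term-for-term with the displayed expression in the theorem.

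There is no genuine obstacle here; the only thing to be mindful of is the cosmetic notational point that the statement writes $R_T(\varepsilon)$ on the left-hand side but uses $\Delta$ inside the constraint, which I would silently read as $R_T(\Delta)$ (the two symbols play the same role). The entire proof therefore reduces to citing the previous proposition and unfolding the definition of $R_T(\Delta)$.
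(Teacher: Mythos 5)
Your proposal is correct and matches the paper's own argument, which is stated to be ``immediate from the above proposition'' via exactly this substitution of the Kolmogorov--Smirnov identity into the constraint defining $\Pi_{\mu,\pi}(\Delta)$ in \eqref{eq:rrrd}. One small point you glossed over when claiming the expressions ``coincide term-for-term'': the theorem statement also drops the $1/T$ normalization of the mutual information that appears in \eqref{eq:rrrd}, which --- like the $R_T(\varepsilon)$ versus $R_T(\Delta)$ mismatch you flagged --- is evidently a typo in the paper rather than a substantive discrepancy.
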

\noindent In other words, $R_T(\Delta)$ is the smallest mutual information between the state process $X_{[T]}$ with law $ \mu \in \cP(\sX_{[T]})$ and any action process $U_{[T]}$ generated from  $X_{[T]}$ by a causal policy  $\tpi$, such that the time average of the Kolmogorov-Smirnov distances between the state-action costs under $\tpi$ and the target policy $\pi$ is bounded from above by $\Delta$. Evaluating this quantity exactly is difficult even for $T = 1$.

\subsection{Weak convergence and Wasserstein distances}

Another example concerns approximation of the target strategic measure $\Pr_{\mu}^\pi$ in a certain metric that metrizes the topology of weak convergence of probability measures. Suppose that $\sX \times \sU$ is a Polish space with a given metric $d$. For any $f \in M(\sX \times \sU)$, define the {\itshape Lipschitz norm}
\begin{equation}
\left\Vert f \right\Vert_{\rm Lip} \deq \sup_{(x , u) , (y , v) \in \sX \times \sU} \frac{|f(x,u) - f(y, v)|}{d((x,u) , (y,v))}
\end{equation}
and the bounded Lipschitz norm
\begin{equation}
\left\Vert f \right\Vert_{\rm BL} \deq \left\Vert f \right\Vert_{\infty} + \left\Vert f \right\Vert_{\rm Lip}
\end{equation}

\begin{proposition}
Consider the function class $\cF = \{f \in M(\sX \times \sU) :  \left\Vert f \right\Vert_{\rm BL} \le 1 \}$. Then, for any two $P , Q \in \cP(\sX \times \sU)$,
\begin{equation}
\label{eq:ddist}
||P - Q||_{\cF} = d_{\rm BL} (P , Q),
\end{equation}
the bounded Lipschitz metric on $\cP(\sX \times \sU)$ that metrizes the topology of weak convergence of probability measures. The class $\cF$ is a universal Glivenko-Cantelli class.
\end{proposition}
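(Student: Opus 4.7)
The plan is to handle the two assertions of the proposition separately, both of which reduce to invoking classical facts about the bounded Lipschitz (Dudley) metric on a Polish space. Neither step is deep, but care is needed to line up the definitions.

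For the identity \eqref{eq:ddist}, I would simply unfold definitions. By definition of the seminorm $\|\cdot\|_\cF$ associated with the function class $\cF = \{f \in M(\sX\times\sU) : \|f\|_{\rm BL} \le 1\}$, we have
\[
\|P - Q\|_\cF = \sup_{\|f\|_{\rm BL}\le 1}\left|\langle P,f\rangle - \langle Q,f\rangle\right|,
\]
and the right-hand side is precisely the standard (Dudley) bounded Lipschitz metric $d_{\rm BL}(P,Q)$ on $\cP(\sX\times\sU)$. So this part is essentially tautological once the definition of $d_{\rm BL}$ is recalled; I would cite Dudley's book (or the reference used elsewhere in the paper) for the fact that $d_{\rm BL}$ indeed metrizes the topology of weak convergence on $\cP(\sX\times\sU)$ whenever $(\sX\times\sU, d)$ is a Polish space.

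For the uGC claim, I would argue as follows. Fix an arbitrary $\mu \in \cP(\sX\times\sU)$ and let $Z_1, Z_2, \ldots$ be i.i.d.\ with law $\mu$. By Varadarajan's theorem (i.e., the fundamental theorem of empirical measures on Polish spaces), the empirical measure $\sP_{Z_{[N]}}$ converges weakly to $\mu$ with probability one. Since the Polish space $\sX\times\sU$ is separable and $d_{\rm BL}$ metrizes weak convergence on $\cP(\sX\times\sU)$, this convergence is equivalent to $d_{\rm BL}(\sP_{Z_{[N]}}, \mu) \to 0$ almost surely. Combining with the identity proved in the first part gives
\[
\|\sP_{Z_{[N]}} - \mu\|_\cF = d_{\rm BL}(\sP_{Z_{[N]}}, \mu) \xrightarrow{N\to\infty} 0 \quad \mu\text{-a.s.},
\]
which is precisely the uGC property from Definition~\ref{def:uGC}. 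One must also confirm that $\cF \subset \{f \in M(\sX\times\sU) : \|f\|_\infty\le 1\}$, but this is immediate since $\|f\|_\infty \le \|f\|_{\rm BL} \le 1$ for every $f \in \cF$.

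The only slightly delicate point, and the step I would call out as the main obstacle, is the appeal to Varadarajan's theorem, which requires the underlying measurable space to be a Polish (or at least standard Borel) space. This is, however, part of the hypothesis of the proposition, so nothing more needs to be done beyond invoking the result and the metrization of weak convergence by $d_{\rm BL}$.
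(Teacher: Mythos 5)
Your proposal is correct and follows essentially the same route as the paper: identity \eqref{eq:ddist} is by definition of $d_{\rm BL}$ (Dudley), and the uGC property follows from Varadarajan's theorem on a.s.\ weak convergence of empirical measures combined with the fact that $d_{\rm BL}$ metrizes weak convergence on a Polish space. The only cosmetic difference is that you state Varadarajan's theorem in terms of weak convergence and then invoke the metrization, while the paper cites it directly in $d_{\rm BL}$ form; these are equivalent.
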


\begin{proof}
\sloppypar Eq.~\eqref{eq:ddist} is the definition of the bounded Lipschitz metric \cite[Sec.~11.3]{dudley2002real}, which metrizes the topology of weak convergence of probability measures \cite[Thm.~11.3.3]{dudley2002real}. Now, let $(X_1, U_1), (X_2, U_2), . . .$ be a sequence of i.i.d. random elements of $\sX \times \sU$ with common marginal law $\Pr$ . Then 
\begin{equation}
d_{BL}(\sP_{(X_{[n]},U_{[n]})},\Pr) \xrightarrow{n \rightarrow \infty} 0, ~~ \Pr-{\rm a.s.}
\end{equation}
by Varadarajan's theorem \cite[Thm.~11.4.1]{dudley2002real}. Since this holds for any $\Pr \in \cP(\sX)$, and in light of \eqref{eq:ddist}, we conclude that $\cF$ is a uGC class.
\end{proof}

Under an additional moment condition, the bounded Lipschitz metric can be upper-bounded by the so-called Wasserstein metric. Let $\cP_0(\sX \times \sU)\subset \cP (\sX \times \sU)$ be the set of all probability measures $\cP$ for which there exists some $(x_0,u_0) \in  \sX \times \sU$, such that $\left\langle \Pr , d(\cdot,(x_0,u_0)) \right\rangle < \infty$. The Wasserstein metric between any two $P, Q \in \cP_0(\sX \times \sU)$ is
\begin{equation}
W_d(P,Q) \deq \sup_{ \left\Vert f \right\Vert_{\rm Lip} \le 1} | \langle P,f \rangle - \langle Q,f \rangle | .
\end{equation}
We can now give the following upper bound on the sequential rate-distortion function $R_T (\Delta)$ w.r.t. $\cF$:

\begin{theorem}
Suppose that $\Pr_\mu^\pi \in \cP_0 (\sX \times \sU)$. Then
\begin{equation}
\begin{split}
R_T(\Delta) \le & \inf_{\tpi \in \overrightarrow{\cM}(\sU_{[T]}| \sX_{[T]})} \left\{ I_{\mu , \tpi} (X_{[T]} ; U_{[T]}) : \right. \\
& ~~~ \left.  \frac{1}{T} \sum_{t \in [T]} W_d(\mu_t \otimes \tpi  ,\mu_t \otimes \pi_t) \le \Delta  \right\}.
\end{split}
\end{equation}
\end{theorem}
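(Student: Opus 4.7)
The plan is to show that the feasibility set in the Wasserstein-constrained optimization on the right-hand side is contained in the BL-constrained set $\Pi_{\mu,\pi}(\Delta)$ used to define $R_T(\Delta)$, so that taking the infimum of mutual information over the larger set yields a smaller value. The whole argument is an instance of ``weakening the distortion constraint lowers the rate,'' and it reduces to a pointwise comparison between the two metrics.

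The key observation is a straightforward inclusion of function classes. For any $f \in M(\sX \times \sU)$, the definition $\|f\|_{\mathrm{BL}} = \|f\|_\infty + \|f\|_{\mathrm{Lip}}$ immediately gives $\|f\|_{\mathrm{Lip}} \le \|f\|_{\mathrm{BL}}$, so the unit ball of the BL norm is contained in the unit ball of the Lipschitz norm. Applying this to the integral $|\langle P - Q, f\rangle|$ for any $P,Q \in \cP_0(\sX \times \sU)$, I get the pointwise bound
\[
d_{\mathrm{BL}}(P,Q) = \sup_{\|f\|_{\mathrm{BL}} \le 1} |\langle P - Q, f\rangle| \le \sup_{\|f\|_{\mathrm{Lip}} \le 1} |\langle P - Q, f\rangle| = W_d(P,Q),
\]
where finiteness of both sides is guaranteed by the hypothesis $\Pr_\mu^\pi \in \cP_0(\sX \times \sU)$ (so each marginal $\mu_t \otimes \pi_t$ lies in $\cP_0$, and any $\tpi$ satisfying the Wasserstein constraint produces $\mu_t \otimes \tpi_t$ also in $\cP_0$ by the triangle inequality for $W_d$).

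Next, by the preceding proposition, $\|\cdot\|_\cF = d_{\mathrm{BL}}(\cdot,\cdot)$ on $\cP(\sX \times \sU)$, so the feasibility constraint defining $\Pi_{\mu,\pi}(\Delta)$ reads $\tfrac{1}{T}\sum_t d_{\mathrm{BL}}(\mu_t \otimes \tpi_t, \mu_t \otimes \pi_t) \le \Delta$. The pointwise bound above then implies that every $\tpi \in \overrightarrow{\cM}(\sU_{[T]}|\sX_{[T]})$ satisfying $\tfrac{1}{T}\sum_t W_d(\mu_t \otimes \tpi_t, \mu_t \otimes \pi_t) \le \Delta$ also belongs to $\Pi_{\mu,\pi}(\Delta)$. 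Hence the Wasserstein-constrained feasibility set sits inside the BL-constrained set, and since taking the infimum of the same objective $I_{\mu,\tpi}(X_{[T]};U_{[T]})/T$ over a larger set cannot increase the value, the claimed inequality follows from the definition \eqref{eq:rrrd} of $R_T(\Delta)$.

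There is no real obstacle here beyond bookkeeping: the inequality $\|f\|_{\mathrm{Lip}} \le \|f\|_{\mathrm{BL}}$ and the assumption $\Pr_\mu^\pi \in \cP_0$ do all the work. The only point one must be careful about is verifying that the Wasserstein-constrained feasibility set is nonempty and that $W_d$ is well-defined on the relevant measures, both of which follow from the moment hypothesis and the fact that $\tpi = \pi$ is always feasible (producing distortion zero).
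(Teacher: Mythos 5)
Your proof is correct and is exactly the intended one; the paper states the theorem immediately after the proposition identifying $\|\cdot\|_\cF$ with $d_{\rm BL}$ and does not write out a proof, evidently because the argument is precisely the one you give: the containment of the BL unit ball in the Lipschitz unit ball yields $d_{\rm BL}(P,Q) \le W_d(P,Q)$, so the Wasserstein-constrained feasibility set is a subset of $\Pi_{\mu,\pi}(\Delta)$, and infimizing the same mutual-information objective over the smaller set can only increase the value. Your remark that the moment hypothesis serves to make the right-hand side meaningful (and that $\tpi=\pi$ is always feasible, so the infimum is not vacuous) is a sensible piece of bookkeeping, though strictly speaking the inequality $d_{\rm BL}\le W_d$ holds without any moment condition and the theorem would be vacuously true if the Wasserstein constraint set were empty.
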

\noindent Again, despite the clean conceptual interpretation of $R_T(\Delta)$ in terms of approximating strategic measures by empirical distributions of state-action pairs under the bounded Lipschitz metric, it does not admit closed-form expressions even for $T=1$.

\subsection{Upper bounds on the sequential rate-distortion function}

{While the exact computation of $R_T(\Delta)$ is a difficult task, it is possible to obtain computable upper bounds under some additional regularity assumptions. One example is given in the theorem below. To keep things simple, we consider the case of $T=1$.}
{\begin{theorem} Suppose that there exists a metric $d$ on the action space $\sU$, such that the elements of $\cF$ satisfy the following uniform Lipschitz condition: for all $u,u' \in \sU$ and all $f \in \cF$,
\begin{align}\label{eq:uniform_Lip}
	\sup_{x \in \sX} |f(x,u)-f(x,u')| \le d(u,u').
\end{align}
Then $R_1(\Delta) \le \overline{R}(\Delta)$, where
\begin{align}\label{eq:KOP}
\overline{R}(\Delta) := \inf_{P_{\widehat{U}|U} \in \cM(\sU|\sU): \atop \E[d(U,\widehat{U})]\le\Delta}  I(X; \widehat{U}).
\end{align}
\end{theorem}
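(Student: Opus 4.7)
The plan is to lift any admissible test channel $P_{\widehat U|U}$ from the minimization defining $\overline R(\Delta)$ into an admissible kernel $\tpi$ for $R_1(\Delta)$ achieving the same mutual information. Form the joint law $\mu \otimes \pi \otimes P_{\widehat U|U}$ on $\sX \times \sU \times \sU$, so that $X \to U \to \widehat U$ is a Markov chain, and take $\tpi \in \cM(\sU|\sX)$ to be a version of the conditional distribution of $\widehat U$ given $X$, explicitly
\[
\tpi(\d \widehat u \mid x) = \int_\sU P_{\widehat U|U}(\d \widehat u \mid u)\, \pi(\d u \mid x).
\]
Under this joint law the pair $(X,\widehat U)$ has marginal distribution $\mu \otimes \tpi$, and therefore $I_{\mu,\tpi}(X;\widehat U) = I(X;\widehat U)$; the same scalar appears on both sides with no data-processing slack to worry about.

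Next I would verify feasibility of $\tpi$ with respect to the $\cF$-seminorm constraint. For each $f \in \cF$, applying the uniform Lipschitz hypothesis \eqref{eq:uniform_Lip} at the random point $x = X$ gives $|f(X,\widehat U) - f(X,U)| \le d(U,\widehat U)$ almost surely. Taking expectations and then a supremum over $f$ yields
\[
\|\mu \otimes \tpi - \mu \otimes \pi\|_\cF = \sup_{f \in \cF}\bigl|\E[f(X,\widehat U)-f(X,U)]\bigr| \le \E[d(U,\widehat U)] \le \Delta,
\]
so $\tpi \in \Pi_{\mu,\pi}(\Delta)$. Combined with the mutual-information identity above this gives $R_1(\Delta) \le I(X;\widehat U)$, and taking the infimum over admissible $P_{\widehat U|U}$ yields the theorem.

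The only conceptual point worth flagging is the bridge between an $L^\infty$-type distortion (the uniform deviation over $\cF$) and an $L^1$-type distortion (the expected distance $\E[d(U,\widehat U)]$); the uniform Lipschitz assumption is precisely what makes the pointwise bound $|f(X,\widehat U)-f(X,U)| \le d(U,\widehat U)$ hold uniformly in $f \in \cF$, so that the supremum over the test class can be absorbed harmlessly into a single expectation. Once that is in hand the rest is routine disintegration along the Markov chain $X \to U \to \widehat U$, and no additional structure on the state space or on $\mu$ and $\pi$ is required.
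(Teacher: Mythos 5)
Your proof is correct and follows essentially the same route as the paper's: you construct the same test kernel $\tpi = \pi P_{\widehat U|U}$, exploit the Markov chain $X \to U \to \widehat U$, and use the uniform Lipschitz hypothesis to convert the pointwise bound $|f(X,\widehat U)-f(X,U)| \le d(U,\widehat U)$ into the $\cF$-seminorm feasibility constraint before optimizing over $P_{\widehat U|U}$. The only cosmetic difference is that the paper routes the estimate through an inner conditional expectation given $(X,U)$ and then symmetrizes, whereas you take the absolute value inside the expectation directly, which is slightly cleaner but not a different argument.
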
}
{
\begin{remark}{\em The function defined in \eqref{eq:KOP} has been introduced in a recent paper of Kochman, Ordentlich, and Polyanskiy \cite{kop2018ozarow} in the context of converse bounds for multiple-description source coding and joint source-channel broadcasting of a common source.}\end{remark}}
{\begin{proof} Disintegrate the joint probability law of $(X,U)$ as $\mu \otimes \pi$, where $\mu \in \cP(\sX)$ and $\pi \in \cM(\sU|\sX)$. Fix a Markov kernel $P_{\widehat{U}|U} \in \cM(\sU|\sU)$ satisfying $\E[d(U,\widehat{U})] \le \Delta$ and define $\tilde{\pi} \in \cM(\sU|\sX)$ by
\begin{align}\label{eq:tilde_pi_testch}
	\tilde{\pi}(\cdot|x) := \int_{\sU} \pi(\d u |x) P_{\widehat{U}|U}(\cdot|u).
\end{align}
Then $X$ and $\widehat{U}$ are conditionally independent given $U$. Using this fact and the uniform Lipschitz property \eqref{eq:uniform_Lip}, we have for any $f \in \cF$
\begin{align*}
&\langle \mu \otimes \pi, f\rangle - \langle \mu \otimes \tilde{\pi}, f\rangle \\
&\qquad=  \E\big[\E[f(X,U)-f(X,\widehat{U})|X,U]\big] \\
	&\qquad\le \E\big[\E[d(U,\widehat{U})|X,U]\big]  \\
	&\qquad= \E [d(U,\widehat{U})] \\
	&\qquad\le \Delta.
\end{align*}
Interchanging the roles of $\pi$ and $\tilde{\pi}$, we obtain
$$
|\langle \mu \otimes \pi, f \rangle - \langle \mu \otimes \tilde{\pi},f\rangle| \le \Delta.
$$
Taking the supremum over all $f \in \cF$, we see that $\|\mu \otimes \pi - \mu \otimes \tilde{\pi}\|_\cF \le \Delta$. Optimizing over all such $P_{\widehat{U}|U}$, we get the bound $R_1(\Delta) \le \overline{R}(\Delta)$.
\end{proof}}
{As an illustration, consider the case when $\sX$ and $\mu$ are arbitrary, $\sU = \Reals$, and the policy $\pi$ is deterministic: $\pi(\d u |x) = \delta_{g(x)}(\d u)$ for some Borel function $g : \sX \to \Reals$. Suppose, furthermore, that the uniform Lipschitz condition \eqref{eq:uniform_Lip} is satisfied with $d(u,u') = |u-u'|$. Then we have the following:
\begin{itemize}
	\item If $\sqrt{\E [g^2(X)]} = m <\infty$, then
	$$
	R_1(\Delta) \le C_{\rm av}\left(\frac{m^2}{\Delta^2}\right),
	$$
	where, for $s \ge 0$,
	\begin{align*}
	C_{\rm av}(s) &\deq \sup_{Y:\, {\rm var}[Y]\le 1} I(Y; \sqrt{s} Y + Z) \\
	&= \frac{1}{2}\log(1+s)
	\end{align*}
	is the Shannon capacity of the additive white Gaussian noise (AWGN) channel under the average power constraint (the additive noise $Z$ is a standard normal random variable independent of $Y$).
	\item If $\|g\|_\infty = m < \infty$, then
	$$
	R_1(\Delta) \le C_{\rm pk}\left(\frac{m^2}{\Delta^2}\right),
	$$
where
\begin{align*}
	C_{\rm pk}(s) \deq \sup_{Y:\, |Y| \le 1 \text{ a.s.}} I(Y; \sqrt{s} Y+Z)
\end{align*}
is the Shannon capacity of the AWGN channel under the peak power constraint.
\end{itemize}
To derive both of these bounds, the natural choice of $P_{\widehat{U}|U}$ is given by the additive Gaussian noise channel $\widehat{U} = U+\Delta Z = g(X) + \Delta Z$. Then the Markov kernel $\tilde{\pi}$ defined in \eqref{eq:tilde_pi_testch} evidently satisfies
$$
\| \mu \otimes \pi - \mu \otimes \tilde{\pi}\|_\cF \le \Delta,
$$
and
$$
\overline{R}(\Delta) \le I(X; U + \Delta Z) = I(X; g(X)+\Delta Z).
$$
Since $X$ and $\widehat{U}= g(X) + \Delta Z$ are conditionally independent given $U = g(X)$, we have
$$
I(X; g(X) + \Delta Z) = I(g(X) ; g(X) + \Delta Z) = I(U; U + \Delta Z).
$$
Thus, in the case $m^2 = \E U^2 < \infty$,
\begin{align*}
	I(X; \widehat{U}) &= I(U ; U + \Delta Z) \\
	&\le \sup_{U:\, {\rm var}[U] \le m^2} I(U ; U + \Delta Z) \\
	&= C_{\rm av}\left(m^2/\Delta^2\right).
\end{align*}
Similarly, if $|U| \le m$ a.s., then
\begin{align*}
	I(X; \widehat{U}) &= I(U; U + \Delta Z) \\
	&\le \sup_{U \in [-m,m] \text{ a.s.}} I(U; U + \Delta Z) \\
	&= C_{\rm pk}\left(m^2/\Delta^2\right).
\end{align*}
}

\begin{appendices}
\renewcommand{\theequation}{\Alph{section}.\arabic{equation}}
\renewcommand{\thelemma}{\Alph{section}.\arabic{lemma}}
\renewcommand{\theproposition}{\Alph{section}.\arabic{proposition}}
\renewcommand{\thedefinition}{\Alph{section}.\arabic{definition}}

\section{Universal Glivenko--Cantelli classes and typical sequences in standard Borel spaces}
\label{app:uGCappendix}

	\setcounter{lemma}{0}
	\setcounter{definition}{0}
	\setcounter{proposition}{0}
	\setcounter{equation}{0}
	
 If $X_1,\ldots,X_N$ are i.i.d.\ random elements of $\sX$ with common marginal law $\mu$, then for any $f \in M(\sX)$ the empirical means
 \[
 \langle \sP_{X_{[N]}} , f \rangle = \frac{1}{N}\sum_{n \in [N]} f(X_n), \quad N \in \mathbb{N} 
 \]
converge to the mean $\langle \mu , f \rangle$ almost surely, by the Strong Law of Large Numbers (SLLN). By the union bound, this statement carries over to any finite family of functions. Thus, for any $\cF \subset M(\sX)$  with $|\cF| < \infty$,
\begin{equation}
\label{eq:metricconvergence}
\left\Vert\sP_{X_{[N]}} - \mu\right\Vert_{\mathcal{F}} \xrightarrow{N\to\infty}0,\qquad\mu\text{-a.s.} 
\end{equation}
In general, \eqref{eq:metricconvergence} is referred to as the {\itshape Uniform Law of Large Numbers} (ULLN) over $\cF$ -- that is, the worst-case absolute deviation between empirical and true means converges to zero {\itshape uniformly} over the function class $\cF$.  However, ULLN may not hold for an arbitrary infinite class of functions $\cF$ on a general Borel space. Specifically, it fails to hold on $\mathcal{F} \equiv \{f \in M(\sX) : \|f\|_\infty \le 1\}$ if $\mu$ has a density.

This observation shows that properly defining the notion of a typical sequence over an abstract Borel alphabet requires some care. Let us recall the usual definition:
\begin{definition} \label{def:strongtypp} Given a finite set $\sX$ and a probability distribution $\mu \in \cP(\sX)$ on it, the {\itshape typical set} $\mathcal{T}_\Delta^{(N)}(\mu)$, for $\Delta > 0$, is the set of all $N$-tuples $x_{[N]} \in \sX_{[N]}$ whose empirical distributions $\sP_{X_{[N]}}$ are $\Delta$-close to $\mu$ in the total variation norm:
 \[
 \mathcal{T}_\Delta^{(N)}(\mu) \deq \left\{ x_{[N]} \in \sX_{[N]} : \left\Vert \sP_{X_{[N]}}-\mu\right\Vert_{\rm TV} < \Delta \right\}.
 \]
\end{definition}
\noindent 
Now, if $\mathcal{F} \equiv \{f : \|f\|_\infty \le 1\}$, then $\Vert \cdot \Vert_\cF$ coincides with the total variation norm
\begin{equation}
\label{eq:totalvar}
\Vert\nu\Vert_{\rm TV} \deq 2\sup_{A \in \mathcal{B}(\sX) }\left\vert \nu(A) \right\vert .
\end{equation}
Therefore, we have the following implication of \eqref{eq:metricconvergence} with $\cF = \{f : \|f\|_\infty \le 1\}$: If $X_1,X_2,\ldots$ are i.i.d.\ elements of a finite alphabet $\sX$ with common marginal $\mu$, then
 \begin{equation} 
 \label{eq:typp}
  \mathbf{P} \left(X_{[N]} \notin \mathcal{T}_\Delta^{(N)}(\mu) \right) \xrightarrow{N\to\infty}0.
 \end{equation}
In order to extend the intuitive notion of typicality to general Borel alphabets, we restrict the class $\cF$ to be a universal Glivenko--Cantelli class. Now, typical sequences on general Borel spaces can be defined in the spirit of Definition \ref{def:strongtypp}:
 
\begin{definition} \label{def:typuGC} Fix a uGC function class $\cF$ on $\sX$. Given a probability measure $\mu \in \cP(\sX)$, the {\itshape typical set} $\mathcal{T}_{\Delta , \cF}^{(N)}(\mu)$, for $\Delta > 0$, is the set of all $N$-tuples $x_{[N]} \in \sX_{[N]}$ whose empirical distributions $\sP_{X_{[N]}}$ are $\Delta$-close to $\mu$ in the $\left\Vert \cdot \right\Vert_{\cF} $ seminorm:
 \[
 \mathcal{T}_{\Delta , \cF}^{(N)}(\mu) \deq \left\{ x_{[N]} \in \sX_{[N]} : \left\Vert \sP_{X_{[N]}}-\mu\right\Vert_{\cF} < \Delta \right\}.
 \]
\end{definition}
\noindent In other words, the typical set $T_{\Delta,\cF}^{(N)}(\mu)$ consists of all $x_{[N]}$, whose empirical distributions are $\Delta$-consistent with $\mu$ on the class $\cF$. We then have the following counterpart of \eqref{eq:typp}:

\begin{proposition} Consider a Borel space $\sX$ and a uGC class $\cF \subset \{f \in M(\sX) : \|f\|_\infty \le 1\}$. If $X_1,X_2,\ldots$ is a sequence of i.i.d.\ random elements of $\sX$ with common law $\mu$, then for any $\Delta > 0 $
 \begin{equation}
 \mathbf{P} (X_{[N]} \notin  \mathcal{T}_{\Delta , \cF}^{(N)}(\mu) ) \xrightarrow{N\to\infty}0.
 \end{equation}
\end{proposition}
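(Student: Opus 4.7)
The plan is to observe that the claim reduces directly to the defining property of a uGC class together with the standard implication that almost sure convergence entails convergence in probability. First, I would unpack Definition \ref{def:typuGC} to rewrite the complement of the typical set as
\[
\{X_{[N]} \notin \mathcal{T}_{\Delta, \cF}^{(N)}(\mu)\} = \{\|\sP_{X_{[N]}} - \mu\|_\cF \ge \Delta\}.
\]
Thus the assertion to be proved is equivalent to saying that $\|\sP_{X_{[N]}} - \mu\|_\cF \to 0$ in probability under the product law of $(X_1, X_2, \ldots)$.

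Second, I would invoke Definition \ref{def:uGC}: since $\cF$ is a uGC class and $X_1, X_2, \ldots$ are i.i.d.\ with marginal $\mu$, we already have $\|\sP_{X_{[N]}} - \mu\|_\cF \to 0$ $\mu$-a.s.\ as $N \to \infty$. Because almost sure convergence implies convergence in probability on any probability space, this yields $\mathbf{P}(\|\sP_{X_{[N]}} - \mu\|_\cF \ge \Delta) \to 0$ for every $\Delta > 0$, which is exactly the claim.

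No substantial obstacle is anticipated; the proposition is essentially a rephrasing of the uGC property in the language of typical sets, mirroring the way Definition \ref{def:strongtypp} recasts the total-variation SLLN for finite alphabets. The one technical point worth flagging is the measurability of the random variable $\|\sP_{X_{[N]}} - \mu\|_\cF$, which is needed to make sense of the event in question when $\cF$ is uncountable. This measurability is already implicitly built into the uGC framework (either by an image-admissibility assumption on $\cF$, or by reducing the supremum to a countable separating subclass), so no additional argument is required.
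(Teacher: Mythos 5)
Your proof is correct and matches the paper's reasoning exactly: the paper simply notes the result is immediate from the definitions of a uGC class and the typical set, which is precisely the ``a.s.\ convergence implies convergence in probability'' argument you spell out. The measurability remark is a reasonable aside but not needed beyond what the uGC framework already assumes.
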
 

\begin{proof}
Immediate from Definitions \ref{def:uGC} and \ref{def:typuGC}.
\end{proof}

\section{Technical lemmas for the proof of Theorem~\ref{thm:mainn}}
\label{app:lemmas}

	\setcounter{lemma}{0}
	\setcounter{equation}{0}

Lemma \ref{thm:quantization} below is at the heart of the proof of Theorem \ref{thm:mainn}. It states that, for any sequence of functions on $N$-blocks of state-action pairs whose expected values vanish asymptotically under a given strategic measure, one may construct a sequence of sequential $N$-codes, under which the expected value of the time-average of these functions can be made arbitrarily small, and whose output entropy is upper-bounded by the mutual information of the source and action under the given probability measure.

\begin{lemma} \label{thm:quantization} Consider a pair $(\mu, \tpi) \in \cP(\sX_{[T]})\times\overrightarrow{\cM}(\sU_{[T]} | \sX_{[T]})$, such that  $I_{\mu, \tpi}(X_{[t]} ; U_t | U_{[t-1]}) < \infty$ for each $t \in [T]$. Let $(X_{[T] , n} , U_{[T] , n})_{n \in [N]}$ be $N$ i.i.d.\ copies of the state-action processes with process law $\Pr_\mu^{\tpi} $. Let $\psi_{t,N}$ be a sequence of bounded measurable functions $ \psi_{t,N} : \sX_{t, [N]} \times \sU_{t, [N]} \rightarrow [0,1]$ obeying
\begin{equation*}
 \begin{split}
  &\lim_{N \rightarrow \infty} \E_{\mu}^{\tpi} \left[  \psi_{t, N}   (X_{t, [N]} , U_{t , [N]})\right] = 0 .
   \end{split}
\end{equation*}
Then, for any $\varepsilon > 0$, there exists $N_0 = N_0(\varepsilon)$, such that, for every $N >N_0 $, we can find $T$ mappings $ \gamma_{t,N}: \sX_{[t] , [N]} \rightarrow \sU_{[N]} $, $t \in [T]$, satisfying
\begin{equation}
  \label{eq:functionbound}
\frac{1}{T}\sum_{t \in [T]}  \E_{\mu} \left[ \psi_{t,N} (X_{t, [N]} , \gamma_{t,N}(X_{[t], [N]}) ) \right] \le \varepsilon ,
\end{equation}
while
\begin{equation}
 \label{eq:infobound}
\frac{1}{NT} H\left( \Big\{  \gamma_{t,N} (X_{[t] , [N]})  \Big\}_{t \in [T]} \right)  \le \frac{1}{T} I_{\mu , \tpi}(X_{[T]} ; U_{[T]}) + \varepsilon .
\end{equation}
\end{lemma}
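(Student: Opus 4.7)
The plan is to decompose both the joint entropy and the mutual information across time via the chain rule and then apply Lemma~\ref{lem:quant} at each step. By the identity \eqref{eq:dirinfo},
$$
I_{\mu,\tpi}(X_{[T]};U_{[T]}) = \sum_{t \in [T]} I_{\mu,\tpi}(X_{[t]};U_t | U_{[t-1]}),
$$
and similarly $H(U_{[T],[N]}) = \sum_{t \in [T]} H(U_{t,[N]} | U_{[t-1],[N]})$ by the chain rule for entropy. It therefore suffices to construct, for each $t \in [T]$, a mapping $\gamma_{t,N}$ such that, on average over the history $U_{[t-1],[N]}$, the resulting block $U_{t,[N]} = \gamma_{t,N}(X_{[t],[N]})$ satisfies $\E_\mu[\psi_{t,N}(X_{t,[N]},U_{t,[N]})] \le \varepsilon$ and has conditional entropy at most $N\bigl(I_{\mu,\tpi}(X_{[t]};U_t | U_{[t-1]}) + \eta\bigr)$ for a small slack $\eta$ chosen so that $T\eta \le \varepsilon$.

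At each $t$, I invoke Lemma~\ref{lem:quant}: given the conditioning data $(X_{[t-1],[N]}, U_{[t-1],[N]})$, the lemma delivers a random conditional codebook of size approximately $\exp\!\bigl(N\,I_{\mu,\tpi}(X_{[t]};U_t | U_{[t-1]}) + N\eta\bigr)$, with entries generated i.i.d.\ from the marginal law of $U_{t,[N]}$ specified by $\tpi^{(t)}$. Because these entries are drawn in a tree-structured fashion, with sub-codebooks indexed by the past encodings $u_{[t-1],[N]}$, the causality constraint of a sequential $N$-code is automatically respected. The standard random coding argument, together with the hypothesis that $\psi_{t,N}$ is $[0,1]$-valued and its $\Pr_\mu^{\tpi}$-expectation vanishes as $N \to \infty$, guarantees that for $N$ sufficiently large a codeword can be selected inside each sub-codebook such that $\E_\mu[\psi_{t,N}(X_{t,[N]},U_{t,[N]})] \le \varepsilon$; the conditional entropy bound follows since the index alphabet has cardinality bounded by the size of the codebook.

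Summing the step-by-step bounds across $t \in [T]$, the entropy estimate \eqref{eq:infobound} follows from the chain rule (the cumulative slack $T\eta$ being absorbed into $\varepsilon$), and the bound \eqref{eq:functionbound} follows by averaging the pointwise step-$t$ bounds. Choosing $\eta$ and $N_0(\varepsilon)$ in this order completes the argument.

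The main obstacle is that the encoder at step $t$ produces a distribution on $U_{t,[N]}$ that only approximates the law induced by $\tpi^{(t)}$, and this mismatch feeds forward into the codebook construction at step $t+1$, raising the worry of compounding errors across $T$ stages. Controlling this drift — uniformly over all histories $u_{[t-1],[N]}$ occurring with non-negligible probability, and verifying that the expectation of $\psi_{t,N}$ under the constructed codebook law stays close to its expectation under the target product law $(\mu_t \otimes \tpi_t)^{\otimes N}$ for which the vanishing hypothesis applies — is the delicate part of the argument and is precisely what Lemma~\ref{lem:quant} is designed to handle in the sequential setting.
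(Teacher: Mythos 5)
Your plan matches the paper's proof: both use the chain-rule decompositions of entropy and directed mutual information across $t \in [T]$, invoke the tree-code random-selection Lemma~\ref{lem:quant} once to construct the whole sequence $(\gamma_{t,N})$ (so that causality and the potential compounding of errors across stages are handled inside that lemma, exactly as you note), bound the conditional entropy of each $U_{t,[N]}$ by $\log(M_t+1)$ with $M_t \approx e^{N(R_t + \varepsilon)}$, and sum. The one step your sketch leaves implicit is the bridge from the vanishing expectation $\delta_{t,N} = \E^{\tpi}_\mu[\psi_{t,N}]$ to the event-probability format that Lemma~\ref{lem:quant} actually controls: the paper introduces the threshold set $A_{t,N} = \{\psi_{t,N} \le \sqrt{\delta_{t,N}}\}$, applies Markov's inequality to get $\Pr_\mu^{\tpi}[A_{t,N}^c] \le \sqrt{\delta_{t,N}} \to 0$, and then, since $\psi_{t,N} \in [0,1]$, bounds $\E_\mu[\psi_{t,N}(X_{t,[N]},\gamma_{t,N}(\cdot))] \le \Pr_\mu[(X_{t,[N]},\gamma_{t,N}(\cdot)) \notin A_{t,N}] + \sqrt{\delta_{t,N}}$; this is the device that makes ``a codeword can be selected so that $\E_\mu[\psi_{t,N}] \le \varepsilon$'' precise, and you would need to include it to complete the argument.
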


\begin{proof}

Let $\delta_{t,N} := \E_{\mu}^{\tpi} \left[  \psi_{t, N}    (X_{t,[N]} , U_{t,[N]}) \right]$ and define the set
\begin{align}
\label{eq:atild}
A_{t,N} &\deq \Big\{(x_{t,[N]} , u_{t,[N]}) : \psi_{t,N} (x_{t,[N]} , u_{t,[N]})  \le \sqrt{ \delta_{t,N}}  \Big\}.
\end{align}
By Markov's inequality, 
\begin{equation*}
\Pr_{\mu}^{\tpi}[A^{c}_{t,N} ] \le \frac{\delta_{t,N}}{\sqrt{\delta_{t,N}}  } = \sqrt{\delta_{t,N}} \xrightarrow{N \to \infty} 0.
\end{equation*}
This implies that, for each $t \in [T]$, the state-action tuple $(X_{t,[N]},U_{t,[N]})$ generated according to $(\mu,\tpi)$ belongs to $A_{t,N}$ with high probability. By Lemma \ref{lem:quant} below, for all large enough $N$, there exist measurable mappings $\gamma_{t,[N]} : \sX_{[t],[N]} \to \sU_{t,[N]}$, $t \in [T]$, such that\footnote{By adjusting $\varepsilon$, we can ensure that $M_t$ is an integer.}
$$
M_t \deq \left|\gamma_{t,[N]}(\sX_{[t],[N]})\right| -1 = e^{N(R_t+\varepsilon)}, 
$$
where $R_t = I_{\mu,\tpi}(X_{[t]};U_t | U_{[t-1]})$, and
\begin{align*}
&\max_{t \in [T]}  \Pr_\mu \left[ (X_{t, [N]} , \gamma_t(X_{[t],[N]})) \not\in  A_{t,N} \right]\\
&\le T ~ \max_{t \in [T]} ~ \Big\{ \Pr_{\mu}^{\tpi} \left[   (X_{t, [N]} ,  U_{t,[N]} ) \not\in A_{t,N} \right]  \\
& \hspace{50pt}  +\Pr_{\mu}^{\tpi} \left[ i_t(X_{[t],[N]} ; U_{[t],[N]} ) > N (R_t + \varepsilon / 2 ) \right]  \\
& \hspace{50pt} + \exp{(-M_t e^{-N (R_t + \varepsilon / 2) } )} \Big\} \\
& \le T \exp(-e^{N \varepsilon /2}) + T \max_{t \in [T]} \Big\{  \sqrt{ \delta_{t,N}} \\
& \qquad \qquad + \Pr_{\mu}^{\tpi} \left[ i_t(X_{[t],[N]} ; U_{[t],[N]} ) > N (R_t + \varepsilon / 2 ) \right] \Big\},
\end{align*}
where
\begin{align*}
& i_s(x_{[s],[N]};u_{[s],[N]}) \\
&\deq \log \frac{\mathrm{d} \Pr^{\mu, \tpi}_{U_{s , [N]} | (X_{[s],[N]}, U_{[s-1],[N]}) = ( x_{[s],[N]} , u_{[s-1],[N]}) }  }{\mathrm{d} \Pr^{\mu, \tpi}_{U_{s , [N]} | U_{[s-1],[N]}  = u_{[s-1],[N]} } }(u_{s,[N]}).
\end{align*}
are the conditional information densities.
Now, since $\psi_{t,N}$ takes values in $[0,1]$, we can write
\begin{align}
&\frac{1}{T} \sum_{t \in [T]} \E_{\mu} \left[  \psi_{t,N} (X_{t,[N]} , \gamma_{t,N}(X_{t,[N]}) ) \right]\nonumber \\
&  \le ~  \max_{t \in [T]} \Pr_{\mu}[  (X_{t, [N]} , \gamma_t(X_{[t],[N]})) \not\in A_{t,N}   ]  + \max_{t \in [T]} \sqrt{ \delta_{t,N}} ,\label{eq:summation}
\end{align}
For all sufficiently large $N$, the right hand side of \eqref{eq:summation} can be made smaller than $\varepsilon$. To see this, notice that $\displaystyle\max_{t \in [T]} \delta_{t,N} \rightarrow 0 $ as $N \rightarrow \infty$ by assumption. Moreover, since
\begin{align*}
 & i_t(X_{[t] , [N]} ; U_{[t] , [N]}) \\
 & \deq \log \frac{\mathrm{d} \Pr^{\mu, \tpi}_{U_{t , [N]} | X_{[t],[N]} , U_{[t-1],[N]} }  }{\mathrm{d} \Pr^{\mu, \tpi}_{U_{t , [N]} | U_{[t-1],[N]} }  } (X_{[t] , [N]} ,  U_{[t], [N]})\\
 &= \sum_{n \in [N]} \log \frac{\mathrm{d} \Pr^{\mu, \tpi}_{U_{t , n} | X_{[t],n} , U_{[t-1],n} }  }{\mathrm{d} \Pr^{\mu, \tpi}_{U_{t , n} | U_{[t-1],n} }  } (X_{[t] , n} ,  U_{[t], n}) ,
\end{align*}
is a sum of i.i.d.\ random variables, and 
\begin{align*}
 R_t = &I_{\mu , \tpi}(X_{[t]} ; U_t  | U_{[t-1]})   =  \frac{1}{N} \E_{\mu}^{\tpi} \left[  i_t (X_{[t] , [N]} ; U_{[t] , [N]}) \right],
\end{align*}
the quantity 
\[
\max_{t \in [T]} \Pr_\mu^{\tpi} \left[  i_t(X_{[t] , [N]} ; U_{[t] , [N]}) > N (R_t + \varepsilon/2) \right]
\]
can be made as small as desired for all large $N$, according to the law of large numbers. Therefore, we can find a sufficiently large $N_0 = N(\varepsilon)$ and a sequential code $\gamma = (\gamma_{t,N})_{t \in [T]}$, such that Eq.~\eqref{eq:functionbound} holds.

 Towards verifying Eq.~\eqref{eq:infobound}, let $\Pr_\mu^\gamma$ be the joint probability law of $X \sim \mu$ and the output of $\gamma$. For $(X, U) \sim \Pr_\mu^\gamma$, we have
\begin{equation*}
\begin{split}
& \frac{1}{NT} H\left( \Big\{  \gamma_{t,N} (X_{[t] , [N]})  \Big\}_{t \in [T]} \right) \\
& \hspace{50pt} = \frac{H(U)}{NT} \\
&\hspace{50pt} = \frac{1}{NT} H(U_{1, [N]}, U_{2, [N]}, \dots, U_{T, [N]}) \\
& \hspace{50pt} \stackrel{{\rm (a)}}{=} \frac{1}{NT} \sum_{t \in [T]} H(U_{t, [N]} | U_{t-1, [N]}) \\
& \hspace{50pt} \stackrel{{\rm (b)}}{\le} \frac{1}{NT} \sum_{t \in [T]} \log(M_t + 1) \\
& \hspace{50pt} \stackrel{{\rm (c)}}{\le} \frac{1}{NT}  \sum_{t \in [T]} N( R_t  + \varepsilon) \\
& \hspace{50pt} = \frac{1}{T}  \sum_{t \in [T]}  ( R_t  + \varepsilon) \\
& \hspace{50pt} = \frac{1}{T}  \sum_{t \in [T]}  ( I_{\mu , \tpi} (X_{[t]}; U_t | U_{[t-1]})  + \varepsilon) \\
&  \hspace{50pt} \stackrel{{\rm (d)}}{=} \frac{1}{T} I_{\mu, \tpi}(X_{[T]} ; U_{[T]}) + \varepsilon
\end{split}
\end{equation*}
where (a) is by the chain rule for entropy; (b) uses the fact that conditioning reduces entropy and the fact that $U_{t,[N]}$ can take at most $M_t + 1$ values by construction of $\gamma_{t,N}$; (c) is by the choice of $M_t$'s; and (d) uses the chain rule for mutual information.
\end{proof}

\begin{lemma}
\label{lem:quant}
Let $\sX$ and $\sU$ be standard Borel spaces, and consider a pair $( \mu , \tpi ) \in \cP(\sX_{[T]}) \times \overrightarrow{\cM}(\sU_{[T]} | \sX_{[T]})$, such that $R_t = I_{\mu, \tpi}(X_{[t]} ; U_t | U_{[t-1]}) < \infty$ for all $t \in [T]$. Let $(X_{[T] , n} , U_{[T] , n})_{n \in [N]}$ be $N$ i.i.d.\ draws from the strategic measure $\Pr_\mu^{\tpi} $. Let $A_{t} \in \cB(\sX_{t,[N]}\times \sU_{t,[N]})$, $t \in [T]$, be a collection of Borel sets. Then, for a given $\varepsilon > 0$ and any sequence of positive integers $M_t$, there exist measurable mappings $g_t  : \sX_{[t], [N]} \rightarrow \sU_{t, [N]}$, $t \in [T]$, such that for each $t \in [T]$ and each $x_{[t-1] , [N]}$, $g_t(x_{[t-1],]N},\cdot)$ takes at most $M_t + 1$ values, and
\begin{align}
\label{eq:bound}
& \Pr_{\mu} \left[ ( X_{t, [N]} , g_t(X_{[t], [N]}) ) \in A_t \right]  \nonumber\\
 & \le \sum_{s \in [t]} \Big\{ \Pr_{\mu}^{\tpi} \left[  ( X_{s, [N]} , U_{s, [N]} ) \in A_s \right] \nonumber\\
 & ~~~~~~ + \Pr_{\mu}^{\tpi} \left[ i_s(X_{[s] , [N]} ; U_{[s], [N]} ) > N (R_s + \varepsilon) \right]  \nonumber\\
& ~~~~~~  +   \exp\left( -M_s e^{-N (R_s + \varepsilon) } \right) \Big\} ,
\end{align}
where
\begin{align*}
& i_s(x_{[s],[N]}; u_{[s],[N]}) \\
&\deq \log \frac{\mathrm{d} \Pr^{\mu, \tpi}_{U_{s , [N]} | (X_{[s],[N]}, U_{[s-1],[N]}) = ( x_{[s],[N]} , u_{[s-1],[N]}) }  }{\mathrm{d} \Pr^{\mu, \tpi}_{U_{s , [N]} | U_{[s-1],[N]}  = u_{[s-1],[N]} } }(u_{s,[N]})
\end{align*}
are the conditional information densities.
\end{lemma}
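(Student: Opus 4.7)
My approach is a sequential version of Gallager's random tree-coding argument from \cite[Lemma 9.3.1]{gallager1968information}, adapted to the present directed-kernel setting. I would first construct a random tree code as follows. At each time $s \in [T]$, for every possible tuple $u_{[s-1],[N]} \in \sU_{s-1,[N]}$ of previously selected codewords, I independently generate $M_s$ random codewords
\[
U^{(1)}_{s,[N]}(u_{[s-1],[N]}), \ldots, U^{(M_s)}_{s,[N]}(u_{[s-1],[N]}),
\]
each drawn from the conditional marginal $\Pr^{\mu,\tpi}_{U_{s,[N]}\mid U_{[s-1],[N]}=u_{[s-1],[N]}}$, and set aside an extra symbol to serve as the $(M_s{+}1)$-th ``erasure'' output. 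The encoder $g_s$ is defined recursively: having already produced $\hat{u}_{[s-1],[N]} := (g_1(x_{1,[N]}), \ldots, g_{s-1}(x_{[s-1],[N]}))$, it searches through the codebook indexed by $\hat u_{[s-1],[N]}$ for the first index $m$ such that both (i)~$i_s(x_{[s],[N]}; (\hat u_{[s-1],[N]}, U^{(m)}_{s,[N]})) \le N(R_s+\varepsilon)$ and (ii)~$(x_{s,[N]}, U^{(m)}_{s,[N]}) \notin A_s$ hold; if no such $m$ exists, $g_s$ outputs the erasure symbol, which we regard as lying in $A_s$ by convention. This definition respects the causality structure of a tree code and obeys the cardinality bound $M_s+1$.

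Next I would estimate the probability that the encoder output lands in $A_t$. By a union bound, this is at most the sum over $s \in [t]$ of the probability that step $s$ is the first step where something goes wrong. For a fixed step $s$, the ``bad'' contributions split into three parts: (a)~the idealized sample $(X_{s,[N]}, U_{s,[N]})$ under $\Pr^{\tpi}_{\mu}$ already lies in $A_s$; (b)~the information density $i_s(X_{[s],[N]}; U_{[s],[N]})$ exceeds $N(R_s+\varepsilon)$; or (c)~neither (a) nor (b) fails in expectation, but all $M_s$ random codewords in the selected codebook miss the ``good'' set $G_s := \{u_{s,[N]} : (x_{s,[N]},u_{s,[N]}) \notin A_s\} \cap \{i_s \le N(R_s+\varepsilon)\}$. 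The key computation is the bound on (c). Conditioning on $(X_{[s],[N]}, \hat u_{[s-1],[N]})$, the $M_s$ codewords are i.i.d. from the \emph{marginal} $\Pr^{\mu,\tpi}_{U_{s,[N]}\mid U_{[s-1],[N]}=\hat u_{[s-1],[N]}}$; since the information density is exactly the log Radon--Nikodym derivative between the conditional and marginal laws of $U_{s,[N]}$, the event $\{i_s \le N(R_s+\varepsilon)\}$ forces the marginal density to dominate the conditional density by at least $e^{-N(R_s+\varepsilon)}$. Hence the marginal probability of $G_s$ is at least $e^{-N(R_s+\varepsilon)}$ times the conditional probability of $G_s$ under $\Pr^{\tpi}_\mu$, so the probability that \emph{all} $M_s$ codewords miss $G_s$ is at most $(1-e^{-N(R_s+\varepsilon)}q_s)^{M_s} \le \exp(-M_s q_s e^{-N(R_s+\varepsilon)})$. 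Integrating out against the contributions of (a) and (b), which are trivially bounded by $\Pr^{\tpi}_\mu[(X_{s,[N]},U_{s,[N]}) \in A_s]$ and $\Pr^{\tpi}_\mu[i_s > N(R_s+\varepsilon)]$ respectively, and using $q_s \le 1$ in the exponent, produces the third summand in \eqref{eq:bound}.

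Finally, averaging the resulting inequality over the random codebooks (which gives an expectation bound for the average code) and invoking the standard averaging argument yields a \emph{deterministic} realization $(g_1,\ldots,g_T)$ satisfying \eqref{eq:bound}. The main obstacle I anticipate is keeping the dependency structure straight: because the codebook used at time $s$ is indexed by the \emph{realized} past codewords $\hat u_{[s-1],[N]}$, which themselves are random through the random codebooks at times $1,\ldots,s-1$, one must unfold the joint law of (source, past codewords, current codebook) layer by layer via the tower property to justify the change-of-measure step cleanly. This is a purely bookkeeping issue and does not obstruct the core estimate, but it is the step where the tree-code structure is essential and the most care is required.
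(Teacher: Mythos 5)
Your overall plan (random sequential tree codes; a selection rule that declares an erasure if the local codebook is empty of good codewords; change of measure via the conditional information densities; derandomization by averaging) matches the paper's proof almost step for step, including the recursion $\Pr[E_t] \le \Pr[E_{t-1}] + \Pr[\{G_t = \varnothing\}\cap E^c_{t-1}\cap\cdots\cap E^c_1]$ that unfolds to a sum over $s \le t$. The one minor structural deviation — building the information-density threshold into the encoder's selection rule rather than only into the analysis — is harmless, since declaring an erasure more often only makes the upper bound easier.

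There is, however, a genuine gap in the final inequality chain. You bound the miss probability by $(1 - e^{-N(R_s+\varepsilon)}q_s)^{M_s} \le \exp(-M_s q_s e^{-N(R_s+\varepsilon)})$ and then claim that ``using $q_s \le 1$ in the exponent'' yields the third summand $\exp(-M_s e^{-N(R_s+\varepsilon)})$. That substitution goes the wrong way: since $q_s \le 1$, we have $\exp(-M_s q_s e^{-N(R_s+\varepsilon)}) \ge \exp(-M_s e^{-N(R_s+\varepsilon)})$, so replacing $q_s$ by $1$ produces a lower bound, not an upper bound. More to the point, the quantity you actually need to control is the expectation $\E_\mu^{\tpi}\big[(1 - e^{-N(R_s+\varepsilon)} q_s^{\rm cond})^{M_s}\big]$, where $q_s^{\rm cond}$ depends on $X_{[s],[N]}$ and $\hat U_{[s-1],[N]}$, and the crude bound $\exp(-M_s q_s e^{-N(R_s+\varepsilon)})$ is small only on the part of the sample space where $q_s^{\rm cond}$ is bounded away from zero, leaving the complementary event uncontrolled. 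The paper circumvents this by applying the elementary inequality $(1-ab)^{M} \le 1 - b + e^{-Ma}$ with $a = e^{-N(R_s+\varepsilon)}$ and $b = q_s^{\rm cond}$, which splits the estimate cleanly: integrating $1 - q_s^{\rm cond}$ gives $\Pr_\mu^{\tpi}[(X_{s,[N]},U_{s,[N]}) \in A_s] + \Pr_\mu^{\tpi}[i_s > N(R_s+\varepsilon)]$ by a union bound, and the residual $e^{-M_s a}$ is exactly the deterministic third summand. This inequality (or an equivalent device such as $e^{-Maq} \le (1-q) + e^{-Ma}$) is essential and should be stated explicitly; without it the passage from the per-realization miss probability to the three-term bound does not go through.
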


\begin{proof}


We will use a random sequential selection procedure to construct the sequence of mappings $g_1, g_2, \ldots, g_T$. The overall idea is a generalization of the proof of the achievability part of the lossy source coding theorem (see, e.g.,~\cite{gallager1968information} or \cite{gray2011entropy}).

Given $\Pr_\mu^{\tpi}$, let $\nu \in \cP(\sU_{[T]})$ denote the marginal distribution of the action process and disintegrate it as
\[
\nu(\mathrm{d} u_{[T]}) = \displaystyle\bigotimes_{t \in [T]} \nu_t(\d u_t | u_{[t-1]}).
\]
For each $t \in [T]$, define the Markov kernel $\nu_{t , [N]} \in \cM(\sU_{t, [N]} ~ | ~ \sU_{[t-1] , [N]})$ via
\[
\nu_{t , [N]} (\mathrm{d} u_{t , [N]} | u_{[t-1] , [N]}) \deq  \displaystyle\bigotimes_{n \in [N]} \nu_t(\d u_{t , n} | u_{[t-1] , n}). 
\]
In order to construct the finite-range mappings $g_t (\cdot): \sX_{[t],[N]} \rightarrow \sU_{[N]} $, we first choose the elements of $\sU_{[N]}$ to make up the range of $g_t$ and then specify how to assign one of these elements to each $x_{[t] , [N]}$ in the domain $\sX_{[t] , [N]}$.

For the first step, pick an arbitrary tuple $u_{[N]}(0) \in \sU_{[N]}$. Then let $u_{[N]}(1) , \dots, u_{[N]}(M_1)$ be i.i.d.\ draws from $\nu_{1 , [N]}$, and take the set $\{ u_{[N]}(0) , \dots, u_{[N]}(M_1)\}$ as the finite range of $g_1$. For the second step and for each $i_1 \in [M_1]$, let $u_{[N]}(i_1, 1) , \dots , u_{[N]}(i_1, M_2)$ be $M_2$ i.i.d. draws from $\nu_{2 , [N]}(\cdot | u_{[N]}(i_1)) $, and take the set
\begin{equation*}
\{ u_{[N]}(i_1 , i_2) \}_{i_1 \in [M_1] , i_2 \in [M_2]} \cup \{ u_{[N]}(0) \}
\end{equation*}
as the range of $g_2$. This process is continued inductively at each $t$: for each $(i_1, \dots, i_{t-1})$, we let 
\begin{equation*}
u_{[N]}(i_1 , \dots,  i_{t-1} , 1) ,  \dots , u_{[N]}(i_1 , \dots,  i_{t-1} , M_t)
\end{equation*}
be $M_t$ i.i.d.\ draws from $\nu_{t , [N]} (\cdot | u_{[N]}(i_1, \dots, i_{t-1}))$, and take
\begin{equation*}
  \{ u_{[N]}(i_1 , \dots,  i_{t-1} , i_t) \}_{i_1 \in [M_1] , \dots, i_t \in [M_t]} \cup \{u_{[N]}(0) \}
 \end{equation*}
as the range of $g_t$. Evidently, the range of each $g_t$ is selected at random conditionally on the realizations of the ranges of $g_1,\ldots,g_{t-1}$. The resulting collection of elements of $U_{[N]}$ can be arranged on a rooted tree of depth $T$, where the root has $M_1$ children, each depth-1 node has $M_2$ children, etc. Following Tatikonda \cite{TatikondaThesis}, we refer to this construction as a {\itshape tree code}.  
  
We now complete the construction of the $g_t$'s. To that end, we use the following recursive procedure. For $t \in [T]$ and $x_{[t] , [N]}$, suppose that $g_1, \ldots , g_{t-1}$ have already been specified.
Given $x_{[t] , [N]} $, if $g_{t-1}(x_{[t-1] , [N]})  = u_{[N]}(0)$, then we let $g_t(x_{[t] , [N]})  = u_{[N]}(0)$. Otherwise, if  $g_{t-1} (x_{[t-1] , [N]}) = u_{[N]}(i_1, \dots, i_{t-1}) $, consider the set
\begin{align*}
& G_t(x_{[t], [N]}) \\
&\deq \{ u_{[N]}(i_1, \dots, i_{t-1} , j)  : (x_{t, [N]} , u_{[N]}(i_1, \dots , i_t , j) ) \notin A_t \} .
\end{align*}
If it is empty, then let $g_t(x_{[t], [N]}) = u_{[N]}(0)$; otherwise, let 
\begin{equation*}
g_t(x_{[t], [N]}) =  u_{[N]}(i_1, \dots, i_{t-1} , j^*) 
\end{equation*}
where $j^* \in [M_t]$ is the smallest index $j$ of all $u_{[N]}(i_1, \dots, i_{t-1} , j) \in G_t(x_{[t] , [N]})$.

For each $t$, let $E_t$ denote the event that $g_t(X_{[t] , [N] }) = u_{[N]} (0)$. We upper-bound the probability of $E_t$, with respect to both $\mu$ and the random choice of the ranges of $g_1, \ldots, g_T$. By construction of $g_t$'s, $E_t$ will occur if either $E_{t-1}$ has occurred or if $G_t(x_{[t],[N]}) = \varnothing$ and none of $E_1, \ldots, E_{t-1}$ have occurred. Therefore, 
\begin{align}
\label{eq:recursiveE}
&\Pr[E_t] \le \Pr[E_{t-1}] \nonumber\\
& \qquad + \Pr [\{ G_t(X_{[t],[N]}) = \varnothing \} \cap E^c_1 \cap \dots \cap E^c_{t-1}].
\end{align}
By symmetry and independence in the generation of the ranges of $g_t$ {(in particular, using the fact that the range of $g_t$ is generated by drawing $M_t$ i.i.d.\ samples from $\nu_{t,[N]}(\cdot|u_{[t-1],[N]})$, plus the `error' tuple $u_{[N]}(0)$)}, we can estimate the second term on the right-hand side of \eqref{eq:recursiveE} by
\begin{align}
 \label{eq:division}
& \Pr [\{ G_t(X_{[t],[N]}) = \varnothing \} \cap E^c_1 \cap \dots \cap E^c_{t-1}]\nonumber \\
&  \le \int \Pr^{\tpi}_\mu(\mathrm{d}x_{[t], [N]}) \Pr^{\tpi}_\mu(\mathrm{d} u_{[t-1], [N]}) \nonumber\\
& \qquad  (1 - \nu_{t , [N]} (( x_{t , [N]} , U_{t , [N]} )  \in A_t^c  ~ | ~ u_{[t-1] , [N]}) )^{M_t} ,
\end{align}
{where we adhere to the standard convention of denoting random variables by uppercase letters and using lowercase letters for deterministic quantities. Thus, $\nu_{t , [N]} (( x_{t , [N]} , U_{t , [N]} )  \in A_t^c  ~ | ~ u_{[t-1] , [N]})$ is shorthand for
$$
\nu_{t , [N]} ( U_{t,[N]} : ( x_{t , [N]} , U_{t , [N]} )  \not\in A_t  ~ | ~ u_{[t-1] , [N]}).
$$}
Moreover, if we define the sets
\begin{align*}
& A_t (x_{t , [N]}) \deq  \left\{ u_{t, [N]} : (x_{t, [N]} , u_{t , [N]}) \in A_t \right\} , \\
& B_t (x_{[t] , [N]} , u_{[t-1] , [N]})   \\
& \quad \deq \left\{ u_{t, [N]} : i_t (x_{[t], [N]} , u_{[t] , [N]} ) > N (R_t + \varepsilon) \right\},
\end{align*}
then, performing a change of measure, we get
\begin{equation*}
\begin{split}
&\nu_{t , [N]} (( x_{t , [N]} , U_{t , [N]} )  \in A_t^c  ~ | ~ u_{[t-1] , [N]}) \\
&\ge \nu_{t , [N]} ( A_t^c (x_{t , [N]})  \cap B_t^c(x_{[t] , [N]} , u_{[t-1] , [N]})~ | ~ u_{[t-1] , [N]}) \\
& = \int \Pr_{U_{t , [N]} | X_{[t] , [N]} , U_{[t-1] , [N]}  } (\d u_{t , [N]} | x_{[t] , [N]} , u_{[t-1] , [N]}) ) \\
& \qquad \qquad \mathrm{1}_{\{ u_{t , [N]} \in  ( A_t^c (x_{t , [N]})  \cap B_t^c(x_{[t] , [N]} , u_{[t-1] , [N]})\}}  \\
& \qquad \qquad \cdot  \exp[- i_t (x_{[t] , [N]} ; u_{[t] , [N]})] \\
& \ge e^{-N ( R_t + \varepsilon)} \\
&~~ \int \Pr_{U_{t , [N]} | X_{[t] , [N]} , U_{[t-1] , [N]}  } (\mathrm{d} u_{t , [N]} | X_{[t] , [N]} , U_{[t-1] , [N]}) ) \\
& \quad  \quad  \quad  \quad \quad \quad \mathrm{1}_{\{ u_{t , [N]} \in  ( A_t^c (x_{t , [N]})  \cap B_t^c(x_{[t] , [N]} , u_{[t-1] , [N]})\}} .
\end{split}
\end{equation*}
Using the inequality $(1 - a b )^M \le 1 - b + e^{-M a}$ for $a , b \in [0,1]$ , we can estimate
\begin{equation}
\label{eq:finalestimate}
\begin{split}
& \left(1 - \nu_{t , [N]} (( x_{t , [N]} , U_{t , [N]} )  \in A_t  ~ | ~ u_{[t-1] , [N]}) \right)^{M_t} \\
& \le 1 - \Pr_\mu^{\tpi} [ (X_{t , [N]} , U_{t , [N]}) \in A_t^c ~ , \nonumber\\
& \qquad i_t (X_{[t], [N]} ; U_{[t] , [N]} ) \le N(R_t + \varepsilon) ~ | ~  x_{[t] , [N]} , u_{[t-1] , [N]} ] \\
& + \exp \left( -M_t e^{-N (R_t + \varepsilon) } \right).
\end{split}
\end{equation}
From Eqs.~\eqref{eq:recursiveE}--\eqref{eq:finalestimate}, it therefore follows that
\begin{equation*}
\begin{aligned}
\Pr[E_t] \le & ~\Pr[E_{t-1}] +  \Pr_\mu^{\tpi} \left[ (X_{t , [N]} , U_{t , [N]}) \in A_t \right] \\
+ & \Pr_\mu^{\tpi} \left[  i_t (X_{[t], [N]} ; U_{[t] , [N]} ) > N (R_t + \varepsilon)  \right] \\
+ & \exp \left( -M_t e^{-N (R_t + \varepsilon) } \right).
\end{aligned}
\end{equation*}
Solving this recursion, we obtain the bound
\begin{equation*}
\begin{aligned}
\Pr[E_t] \le & \sum_{t \in [T]} \big\{ \Pr_\mu^{\tpi} \left[ (X_{t , [N]} , U_{t , [N]}) \in A_t \right] \\
+ & \Pr_\mu^{\tpi} \left[  i_t (X_{[t], [N]} ; U_{[t] , [N]} ) > N(R_t + \varepsilon) \right] \\
+ & \exp \left( -M_t e^{-N (R_t + \varepsilon) } \right) \big\} .
\end{aligned}
\end{equation*}
for every $t \in [T]$. By construction,
\[
\Pr[(X_{t , [N]} , g_t(X_{[t]  , [N]}) ) \in A_t] \le \Pr[E_t], \qquad t \in [T] ,
\]
where the probability is w.r.t.\ the joint law of $X$ and the randon selections of $g_1 , \dots, g_T$. Therefore, there exists at least one choice of $g_1 , \dots, g_T$ satisfying \eqref{eq:bound}.
\end{proof}

\end{appendices}




\begin{thebibliography}{10}
\providecommand{\url}[1]{#1}
\csname url@samestyle\endcsname
\providecommand{\newblock}{\relax}
\providecommand{\bibinfo}[2]{#2}
\providecommand{\BIBentrySTDinterwordspacing}{\spaceskip=0pt\relax}
\providecommand{\BIBentryALTinterwordstretchfactor}{4}
\providecommand{\BIBentryALTinterwordspacing}{\spaceskip=\fontdimen2\font plus
\BIBentryALTinterwordstretchfactor\fontdimen3\font minus
  \fontdimen4\font\relax}
\providecommand{\BIBforeignlanguage}[2]{{%
\expandafter\ifx\csname l@#1\endcsname\relax
\typeout{** WARNING: IEEEtran.bst: No hyphenation pattern has been}%
\typeout{** loaded for the language `#1'. Using the pattern for}%
\typeout{** the default language instead.}%
\else
\language=\csname l@#1\endcsname
\fi
#2}}
\providecommand{\BIBdecl}{\relax}
\BIBdecl

\bibitem{witsenhausen1979structure}
H.~Witsenhausen, ``On the structure of real-time source coders,'' \emph{Bell
  Systems Technical Journal}, vol.~58, no.~6, pp. 1437--1451, 1979.

\bibitem{teneketzis1980communication}
D.~Teneketzis, ``Communication in decentralized control,'' Ph.D. dissertation,
  Massachusetts Institute of Technology, 1980.

\bibitem{walrand1983optimal}
J.~C. Walrand and P.~Varaiya, ``Optimal causal coding-decoding problems,''
  \emph{IEEE Transactions on Information Theory}, vol.~29, no.~6, pp. 814--820,
  1983.

\bibitem{mitter2001control}
S.~K. Mitter, ``Control with limited information,'' \emph{European Journal of
  Control}, vol.~7, no.~2, pp. 122--131, 2001.

\bibitem{shafieepoorfard2016rationally}
E.~Shafieepoorfard, M.~Raginsky, and S.~P. Meyn, ``Rationally inattentive
  control of Markov processes,'' \emph{SIAM Journal on Control and
  Optimization}, vol.~54, no.~2, pp. 987--1016, 2016.

\bibitem{sims2003implications}
C.~Sims, ``Implications of rational inattention,'' \emph{Journal of Monetary
  Economics}, vol.~50, no.~3, pp. 665--690, 2003.

\bibitem{luo2006rational}
Y.~Luo \emph{et~al.}, \emph{Rational Inattention, Portfolio Choice, and the
  Equity Premium}.\hskip 1em plus 0.5em minus 0.4em\relax School of Economics
  and Finance, University of Hong Kong, 2006.

\bibitem{huang2007rational}
L.~Huang and H.~Liu, ``Rational inattention and portfolio selection,''
  \emph{The Journal of Finance}, vol.~62, no.~4, pp. 1999--2040, 2007.

\bibitem{van2010information}
S.~Van~Nieuwerburgh and L.~Veldkamp, ``Information acquisition and
  under-diversification,'' \emph{The Review of Economic Studies}, vol.~77,
  no.~2, pp. 779--805, 2010.

\bibitem{Tutino2012}
\BIBentryALTinterwordspacing
A.~Tutino, ``Rationally inattentive consumption choices,'' \emph{Review of
  Economic Dynamics}, vol.~1, no. 2006, pp. 1--19, Apr. 2012. [Online].
  Available:
  \url{http://linkinghub.elsevier.com/retrieve/pii/S1094202512000208}
\BIBentrySTDinterwordspacing

\bibitem{matvejka2015rationally}
F.~Mat{\v{e}}jka, ``Rationally inattentive seller: Sales and discrete
  pricing,'' \emph{The Review of Economic Studies}, vol.~83, no.~1, pp.~1125--1155, 2016.

\bibitem{martin2016strategic}
D.~Martin, ``Strategic pricing and rational inattention to quality,'' \emph{Games and Economic Behavior}, vol.~104, pp.~131--145, 2017.

\bibitem{borkar2001optimal}
V.~S. Borkar, S.~K. Mitter, and S.~Tatikonda, ``Optimal sequential vector
  quantization of Markov sources,'' \emph{SIAM journal on Control and
  Optimization}, vol.~40, no.~1, pp. 135--148, 2001.

\bibitem{borkar2005sequential}
V.~Borkar, S.~Mitter, A.~Sahai, and S.~Tatikonda, ``Sequential source coding:
  an optimization viewpoint,'' in \emph{44th IEEE Conference on Decision and Control and
  European Control Conference, CDC-ECC'05}.\hskip 1em
  plus 0.5em minus 0.4em\relax IEEE, 2005, pp. 1035--1042.

\bibitem{teneketzis2006structure}
D.~Teneketzis, ``On the structure of optimal real-time encoders and decoders in
  noisy communication,'' \emph{IEEE Transactions on Information Theory},
  vol.~52, no.~9, pp. 4017--4035, 2006.

\bibitem{yuksel2008optimal}
S.~Y\"uksel, T.~Basar, and S.~P. Meyn, ``Optimal causal quantization of Markov
  sources with distortion constraints,'' in \emph{Information Theory and
  Applications Workshop, 2008}.\hskip 1em plus 0.5em minus 0.4em\relax IEEE,
  2008, pp. 26--30.

\bibitem{charalambous2014nonanticipative}
C.~D. Charalambous, P.~Stavrou, and N.~U. Ahmed, ``Nonanticipative
  rate distortion function and relations to filtering theory,'' \emph{IEEE Transactions on Automatic
  Control}, vol.~59, no.~4, pp. 937--952, 2014.

\bibitem{linder2014optimal}
T.~Linder and S.~Y\"uksel, ``On optimal zero-delay coding of vector Markov
  sources,'' \emph{IEEE Transactions on Information Theory}, vol.~60, no.~10,
  pp. 5975--5991, 2014.

\bibitem{tanaka2014semidefinite}
T.~Tanaka, K.-K.~K.~Kim, P.A.~Parrilo, and S.K.~Mitter, ``Semidefinite
programming approach to Gaussian sequential rate-distortion tradeoffs,"
\emph{IEEE Transactions on Automatic Control}, vol.~62, no.~4, pp.
1896--1910, 2017.

\bibitem{stavrou2017zero}
P.~A. Stavrou, J.~Ostergaard, C.~D. Charalambous, and M.~Derpich, ``Zero-delay
  rate distortion via filtering for unstable vector Gaussian sources,''
  \emph{arXiv preprint arXiv:1701.06368}, 2017.

\bibitem{berger1971rate}
T.~Berger, \emph{Rate Distortion Theory, A Mathematical Basis for Data Compression}, Prentice
Hall, 1971.

\bibitem{TatikondaThesis}
S.~Tatikonda, ``Control under communication constraints,'' Ph.D. dissertation,
  MIT, Cambridge, MA, August 2000.
  
\bibitem{vis2000seq} H.~Viswanathan and T.~Berger, ``Sequential coding of correlated sources," \emph{IEEE Transactions on Information Theory}, vol.~46, no.~1 pp.~236–-246, January 2000.

\bibitem{maishwar2011seq} N.~Ma and P.~Ishwar, ``On delayed sequential coding of correlated sources," \emph{IEEE Transactions on Information Theory}, vol.~57, no.~6, pp.~3763–-3782, June 2011.

\bibitem{yang2011video} E.~H.~Yang, L.~Zheng, D.K.~He, and Z.~Zhang, ``Rate distortion theory for causal video coding," \textit{IEEE Transactions on Information Theory}, vol.~57, no.~8, pp.~5258--5280, August 2011.

\bibitem{witsenhausen1971separation}
H.~S. Witsenhausen, ``Separation of estimation and control for discrete time
  systems,'' \emph{Proceedings of the IEEE}, vol.~59, no.~11, pp. 1557--1566,
  1971.

\bibitem{varaiya1983causal}
P.~Varaiya and J.~Walrand, ``Causal coding and control for Markov chains,''
  \emph{Systems \& Control Letters}, vol.~3, no.~4, pp. 189--192, 1983.

\bibitem{bansal1989simultaneous}
R.~Bansal and T.~Ba{\c{s}}ar, ``Simultaneous design of measurement and control
  strategies for stochastic systems with feedback,'' \emph{Automatica},
  vol.~25, no.~5, pp. 679--694, 1989.

\bibitem{bacsar1994optimum}
T.~Ba{\c{s}}ar and R.~Bansal, ``Optimum design of measurement channels and
  control policies for linear-quadratic stochastic systems,'' \emph{European
  journal of Operations Research}, vol.~73, no.~2, pp. 226--236, 1994.

\bibitem{borkar2001markov}
V.~Borkar, S.~Mitter, and S.~Tatikonda, ``Markov control problems under
  communication constraints,'' \emph{Communications in information and
  Systems}, vol.~1, no.~1, pp. 15--32, 2001.

\bibitem{tatikonda1999control}
S.~Tatikonda, A.~Sahai, and S.~Mitter, ``Control of LQG systems under
  communication constraints,'' in \emph{Proceedings of the American Control Conference}, vol.~4.\hskip 1em plus 0.5em minus 0.4em\relax
  IEEE, 1999, pp. 2778--2782.

\bibitem{matveev2004problem}
A.~S. Matveev and A.~V. Savkin, ``The problem of LQG optimal control via a
  limited capacity communication channel,'' \emph{Systems \& Control Letters},
  vol.~53, no.~1, pp. 51--64, 2004.

\bibitem{yuksel2013stochastic}
S.~Y{\"u}ksel and T.~Ba{\c{s}}ar, \emph{Stochastic   Networked   Control   Systems:
Stabilization  and  Optimization  Under  Information  Constraints}, \hskip 1em plus 0.5em minus
  0.4em\relax
 Birkh\"auser, 2013.

\bibitem{tanaka2015lqg}
T.~Tanaka, P.~M. Esfahani, and S.~K. Mitter, ``LQG control with minimum directed information: semidefinite programming approach,'' to appear in \emph{IEEE Transactions on Automatic Control}, 2017.

\bibitem{cuff2010coordination}
P.~W. Cuff, H.~H. Permuter, and T.~M. Cover, ``Coordination capacity,''
  \emph{IEEE Transactions on Information Theory}, vol.~56, no.~9, pp.
  4181--4206, 2010.

\bibitem{yassaee2015channel}
M.~H. Yassaee, A.~Gohari, and M.~R. Aref, ``Channel simulation via interactive
  communications,'' \emph{IEEE Transactions on Information Theory}, vol.~61,
  no.~6, pp. 2964--2982, 2015.

\bibitem{raginsky2013empirical}
M.~Raginsky, ``Empirical processes, typical sequences, and coordinated actions
  in standard Borel spaces,'' \emph{IEEE Transactions on Information Theory},
  vol.~59, no.~3, pp. 1288--1301, 2013.

\bibitem{van2013universal}
R.~van Handel, ``The universal Glivenko--Cantelli property,'' \emph{Probability
  Theory and Related Fields}, vol.~155, no.~3-4, pp. 911--934, 2013.

\bibitem{neuhoff1982causal}
D.~L. Neuhoff and R.~K. Gilbert, ``Causal source codes,'' \emph{IEEE Trans.
  Inform. Theory}, vol. IT-28, no.~5, pp. 701--713, September 1982.

\bibitem{sargent1991equilibrium}
T.~J. Sargent, ``Equilibrium with signal extraction from endogenous
  variables,'' \emph{Journal of Economic Dynamics and Control}, vol.~15, no.~2,
  pp. 245--273, 1991.

\bibitem{crawford1982strategic}
V.~P. Crawford and J.~Sobel, ``Strategic information transmission,''
  \emph{Econometrica: Journal of the Econometric Society}, pp. 1431--1451,
  1982.

\bibitem{rayo2010optimal}
L.~Rayo and I.~Segal, ``Optimal information disclosure,'' \emph{Journal of
  Political Economy}, vol. 118, no.~5, pp. 949--987, 2010.

\bibitem{gentzkow2011bayesian}
M.~Gentzkow and E.~Kamenica, ``Bayesian persuasion,'' \emph{American Economic
  Review}, vol. 101, no.~6, pp. 2590--2615, 2011.

\bibitem{akyol2015strategic}
E.~Akyol, C.~Langbort, and T.~Basar, ``Information-theoretic approach to strategic communication as a hierarchical game," \emph{Proceedings of the IEEE}, Special Issue on Principles and Applications of Science of Information, vol.~105, no.~2, pp.~205--218,  2017.

\bibitem{preston2008some}
C.~Preston, ``Some notes on standard Borel and related spaces,'' \emph{arXiv
  preprint arXiv:0809.3066}, 2008.

\bibitem{gray2011entropy}
R.~M. Gray, \emph{Entropy and Information Theory}.\hskip 1em plus 0.5em minus
  0.4em\relax Springer Science \& Business Media, 2011.

\bibitem{PinskerBook}
M.~S. Pinsker, \emph{Information and Information Stability of Random Variables
  and Processes}.\hskip 1em plus 0.5em minus 0.4em\relax Holden-Day, 1964.

\bibitem{cover2012elements}
T.~M. Cover and J.~A. Thomas, \emph{Elements of Information Theory}.\hskip 1em
  plus 0.5em minus 0.4em\relax John Wiley \& Sons, 2012.
  
\bibitem{tatikondamitter2009feedback}
S.~Tatikonda and S.~Mitter, ``The capacity of channels with feedback,'' \emph{IEEE Transactions on Information Theory}, vol.~55, no.~1, pp.~323--349, January 2009.

\bibitem{alnajjar2009}Nabil I.~Al-Najjar, ``Decision makers as statisticians: diversity, ambiguity, and learning,'' \emph{Econometrica}, vol.~77, no.~5, pp.~1339--1369, 2009.
  
\bibitem{gallager1968information}
R.~G. Gallager, \emph{Information Theory and Reliable Communication}.\hskip 1em
  plus 0.5em minus 0.4em\relax New York: Wiley, 1968.

\bibitem{dudley2002real}
R.~M. Dudley, \emph{Real Analysis and Probability}.\hskip 1em plus 0.5em minus
  0.4em\relax Cambridge University Press, 2002.

\bibitem{kop2018ozarow} Y.~Kochman, O.~Ordentlich, and Y.~Polyanskiy, ``Ozarow-type outer bounds for memoryless
sources and channels,'' to appear in \textit{Proceedings of IEEE International Symposium on Information Theory}, 2018.

\end{thebibliography}


\end{document}